\DeclareMathOperator{\Ker}{Ker}
\DeclareMathOperator{\Attr}{Attr}
\DeclareMathOperator\Fix{Fix}
\DeclareMathOperator{\spec}{spect}
\begin{document}
\mainmatter              
\title{Attractor Subspace and Decoherence-Free Algebra of Quantum Dynamics}
%
%
\author{Daniele Amato\inst{1,2} \and Paolo Facchi\inst{1,2} 
 \and Arturo Konderak\inst{1,2,3}}
\authorrunning{Daniele Amato, Paolo Facchi, and Arturo Konderak} 
%
\tocauthor{Daniele Amato, Paolo Facchi, and Arturo Konderak}
\institute{Dipartimento di Fisica, Universit\`a di Bari, I-70126 Bari, Italy \\ \and  
Istituto Nazionale di Fisica Nucleare, Sezione di Bari, I-70126 Bari, Italy \\ \and
Center for Theoretical Physics, Polish Academy of Sciences, \\Aleja Lotnik\'{o}w 32/46, 02-668 Warsaw, Poland \\
D. Amato \\
\email{daniele.amato@ba.infn.it}
\\ P. Facchi\\ \email{paolo.facchi@ba.infn.it}
\\
A. Konderak\\ \email{akonderak@cft.edu.pl}}

\maketitle              

\begin{abstract}
In this review we discuss some results on the asymptotic dynamics of finite-dimensional open quantum systems in the Heisenberg picture. Both the spectral and algebraic approaches to this topic are addressed, with particular emphasis on their relationship. The analysis is conducted in both the  discrete-time and the continuous-time Markovian settings. In the final part of the work, some issues emerging in the infinite-dimensional case are also discussed. In particular, we provide an example of a Markovian evolution whose decoherence-free algebra is a type $III$ von Neumann algebra.
\end{abstract}
\keywords{decoherence-free algebra, asymptotic evolution, open quantum systems, von Neumann algebras}
\section{Motivation}
\label{intro}
The study of the asymptotic evolution of a system has drawn the attention of the scientific community already in classical mechanics since the seminal work by Poincar{\'e}~\cite{New_methods_Poincare_book}, as solving the problem at all times is quite a formidable task. Not surprisingly, the importance of the analysis at large times appears in the quantum realm too. In  realistic situations a quantum system is always open, i.e. it interacts with its surroundings, usually referred to as the environment or the bath, so that its evolution turns out to be non-unitary. Consequently, the general open-system dynamics is quite more complicated than the closed-system one, and much research has been devoted to understanding the asymptotic behavior of open systems, which resembles the evolution of an isolated system. Moreover, the investigation of what happens only at large times is a crucial question  in physics, e.g. in statistical mechanics~\cite{huangbook2008}. In particular, in recent years the large-time evolution of open quantum systems has turned out to be of upmost importance for applications in quantum computation~\cite{lidar_dec_1998}.

The asymptotic evolution of open quantum systems has been intensively studied by both physicists and mathematicians since the seventies, see e.g.~\cite{evans1977irreducible,Frigerio_Verri_82,Spohn_77,Spohn_rev_1980}. These works  focused on Markovian continuous-time evolutions, called quantum dynamical (or quantum Markov) semigroups and characterized by Gorini, Kossakowski, Lindblad, and Sudarshan (GKLS)~\cite{GKS_76,Lindblad_76}. In this direction, see also the more recent works~\cite{agredo2014decoherence,fagnola_2001,fagnola2019role,Sasso_QMS_2023}, as well as~\cite{carbone2020period,carbone2016irreducible} in the more general context of unital completely positive (UCP) maps. The asymptotic evolution may also be studied in terms of quantum channels in the dynamically equivalent Schr\"odinger picture~\cite{albert2019asymptotics,AF_bounds,AFK_proc,AFK_asympt_1,AFK_asympt4,AFK_asympt3,jex_st_2012,jex_st_2018,rajaramaperipheral_2,rajarama2022peripheral,wolf2010inverse}.

It is not hard to recognize two possible approaches to the study of open-quantum-system asymptotics: a \emph{spectral} approach based on the study of the unimodular eigenvalues and eigenoperators of the UCP map (or of the quantum dynamical semigroup) describing the evolution, and an \emph{algebraic} approach in which the asymptotic dynamics is identified with a $\ast$-homomorphism (or with a semigroup of $\ast$-homomorphisms). Historically, the first approach may be traced back to the works by Jacobs~\cite{jacobs1957fastperiodizitatseigenschaften}, de Leeuw and Glicksberg~\cite{deleeuw1959almost,deleeuw1961applications}. In particular,  in the context of semigroups on Banach spaces, they identified the emergence of a direct-sum decomposition into a reversible algebra related to the peripheral eigenvalues of the semigroup and a stable subspace of asymptotically vanishing vectors.

The second approach is based on the concept of decoherence-free algebra, introduced by Evans~\cite{evans1977irreducible} and whose interest by the quantum probability community~\cite{bulinskii1995some,carbone2020period,fagnola2019role,Frigerio_Verri_82,Sasso_QMS_2023} was renewed by the seminal paper of Blanchard and Olkiewicz~\cite{blanchard2003decoherence}. In particular, in~\cite{blanchard2003decoherence} various physical models were proposed, in which the algebra of observables decomposes into a decoherence-free algebra, on which the semigroup acts homomorphically, and a transient subspace, whose elements vanish in the long-time limit.  
Most works on open-system asymptotics tend to emphasize one of the two schemes, rather than relating them.

The purpose of this Article is to review the connection between the asymptotic subspaces and evolutions considered in the two approaches. The analysis will be mostly performed in a finite-dimensional setting. However, in the last section we will also briefly discuss the issues that arise in the infinite-dimensional case.

The paper is organized as follows: we recall some basic notions on open-system asymptotics in Sect.~\ref{basics_sec}, both in the spectral (Subsect.~\ref{spec-intro_subsec}) and algebraic (Subsect.~\ref{alg-intro_subsec}) approaches. Their connection is then discussed in Sects.~\ref{sec:interplay} and~\ref{Markov_sec}. In particular, Sect.~\ref{sec:interplay} focuses on the discrete-time setting, where the evolution is given by a UCP map, while in Sect.~\ref{Markov_sec} the continuous framework, governed by a GKLS generator, is studied. Finally, Sect.~\ref{sec:infinite_dimensional} dwells on the study of the asymptotic dynamics in the Markovian regime on infinite-dimensional Hilbert spaces. Here, we also give an explicit example of a quantum dynamical semigroup whose decoherence-free algebra is described by a type $III$ von Neumann algebra, which is the type of algebras appearing in quantum field theory.

\section{Preliminary notions}
\label{basics_sec}
		In this paper we will investigate the dynamics in the Heisenberg picture, namely we look at the evolution of the observables. This allows us to properly compare the spectral and algebraic schemes of the asymptotic dynamics.

The dynamics in the unit time is given by a unital completely positive  map $\Phi$ on $\mathcal{B}(\mathcal{H})$~\cite{chruscinski2022dynamical}, the space of linear operators on a finite-dimensional Hilbert space $\mathcal{H}$. Explicitly, the evolved observable $A(n)$ at time $t=n\in \mathbb{N}$ will be the $n$-fold composition $\Phi^n$ of the map $\Phi$ on the observable $A$ at the initial time $t=0$, i.e. $A(n)=\Phi^{n}(A)$. Similarly, in the dynamically equivalent Schr\"odinger picture, the unit-time evolution is described by $\Phi^\dagger$, the adjoint of $\Phi$ with respect to the \emph{Hilbert-Schmidt scalar product}
\begin{equation}
\braket{X}{Y}_{\mathrm{HS}}= \tr(X^\ast Y)\;, \quad X,Y \in \mathcal{B}(\mathcal{H})\;.
\end{equation}
It is easy to see that $\Phi^\dagger$ is a completely positive trace-preserving (CPTP) map on $\mathcal{B}(\mathcal{H})$, called a \emph{quantum channel}, and it describes the evolution of the state of an open quantum system given by a density operator $\varrho$, viz. a positive-semidefinite operator on $\mathcal{H}$ with unit trace.

In the next two subsections, we will present the basic notions behind the spectral and algebraic approaches to open-system asymptotics, whose motivation was already discussed in Sect.~\ref{intro}.

\subsection{Spectral approach to the asymptotics}
\label{spec-intro_subsec}
In this subsection we will discuss the basics of the spectral approach to the asymptotic evolution.
Given a UCP map $\Phi$, we will denote by $\mathrm{spect}(\Phi)$ its spectrum, viz. the set of its eigenvalues. It follows from the definition of $\Phi$ that all its eigenvalues have modulus not larger than one, a property which holds also for the larger class of unital positive maps~\cite{Asorey2008,wolf2012quantum}. Moreover, the spectrum of any UCP map $\Phi$ satisfies the following properties
\begin{enumerate}[a)]
			\item $1 \in \spec(\Phi)$,
			\item $\lambda \in \spec(\Phi) \Rightarrow \bar{\lambda}\in \spec(\Phi)$,
			\item $\spec(\Phi)\subseteq \{ \lambda \in \mathbb{C} \,\vert \, \abs{\lambda} \leqslant 1 \}$.
		\end{enumerate}
We can decompose the spectrum of $\Phi$ as
\begin{equation}
\spec (\Phi) = \spec_{\mathrm{P}} (\Phi)  \cup \spec_{\mathrm{B}} (\Phi)\;,
\label{dec-spec}
\end{equation}
where
\begin{eqnarray}
	\mathrm{spect}_{{\mathrm P}}(\Phi) &\equiv \{ \lambda \in {\mathrm{spect}(\Phi)} \,\vert\, |\lambda|=1 \}\;, \\
\mathrm{spect}_{{\mathrm B}}(\Phi) &\equiv \{ \lambda \in {\mathrm{spect}(\Phi)} \,\vert\, |\lambda| < 1 \}\;, 
\end{eqnarray}
are called the \emph{peripheral} and \emph{bulk} spectra of $\Phi$, respectively. Note that $\spec_{\mathrm{P}}(\Phi) \neq \emptyset$ because of property a), while $\spec_{\mathrm{B}}(\Phi) = \emptyset$ if and only if the UCP map is unitary, i.e. of the form $\Phi(X)=U^\ast X U$, with $X \in \mathcal{B}(\mathcal{H})$ and $U$ unitary on $\mathcal{H}$.

The asymptotic dynamics resulting from the infinite $n$-limit of the evolution takes place inside the asymptotic, peripheral or \emph{attractor} subspace of $\Phi$, defined as
\begin{equation}
	\label{attr_def}\mathrm{Attr}(\Phi) \equiv \mbox{span} \{ X \in \mathcal{B}(\mathcal{H}) \,\vert\, \Phi(X)=\lambda X \mbox{ for some } \lambda \in  \mathrm{spect}_{\mathrm P}(\Phi)\} \;.
\end{equation}
As an immediate consequence of definition~\eqref{attr_def}, the attractor subspace $\Attr(\Phi)$ is invariant under $\Phi$, indeed
\begin{equation}
\label{attr_inva}
\Phi(\Attr(\Phi)) = \Attr(\Phi)\;.
\end{equation}
Importantly, the Jordan decomposition~\cite{kato2013perturbation} of $\Phi$ implies a direct-sum decomposition of the algebra of observables $\mathcal{B}(\mathcal{H})$ into asymptotic and transient components of the evolution, which is related to decomposition~\eqref{dec-spec} of the spectrum. In particular, we have~\cite[Theorem 3.1]{rajaramaperipheral_2}
\begin{equation}
\mathcal{B}(\mathcal{H}) = \Attr(\Phi) \oplus \mathcal{T}(\Phi) \;,
\label{JDK_dec}
\end{equation}
where 
\begin{equation}
\mathcal{T}(\Phi) = \Big\{ X \in \mathcal{B}(\mathcal{H}) \,\Big\vert\, \lim_{n \rightarrow \infty } \Phi^n (X) = 0 \Big\}
\end{equation}
is the \emph{transient part} of the dynamics, involving the generalized eigenoperators corresponding to bulk eigenvalues.
Equation~\eqref{JDK_dec} is an example of Jacobs-de Leeuw-Glicksberg (JdLG) decomposition~\cite{deleeuw1959almost,deleeuw1961applications,jacobs1957fastperiodizitatseigenschaften}. Notice that the rate of convergence to equilibrium is related to the \emph{spectral gap} of the UCP map $\Phi$ defined as
\begin{equation}
g = 1 - \max_{\lambda \in \mathrm{spect}_{{\mathrm B}}(\Phi) } |\lambda|.
\end{equation}
The \emph{peripheral projection} $\mathcal{P}_P$ of $\Phi$ is defined as
\begin{equation}
	\label{P_p_def}
	\mathcal{P}_{\mathrm{P}} =\sum_{\lambda_k\in\mathrm{spect}_{\mathrm{P}}(\Phi)}\mathcal{P}_k\;,
\end{equation}
where $\mathcal{P}_k$ denote the spectral projection of $\Phi$ corresponding to the $k$th peripheral eigenvalue $\lambda_k \in \mathrm{spect}_{\mathrm{P}}(\Phi)$.
It is possible to relate $\mathcal{P}_{\mathrm{P}}$ with the UCP map $\Phi$ via the following formula
\begin{equation}
			\label{P_p_for}
			\mathcal{P}_{\mathrm{P}}=\lim_{i\rightarrow \infty}\Phi^{n_i}\;,
\end{equation}
for some increasing sequence $( n_i)_{i\in\mathbb{N}}$. This ensures that $\mathcal{P}_{\mathrm{P}}$ is a UCP map as well. Clearly, the range of the projection $\mathcal{P}_{\mathrm{P}}$ is the attractor subspace $\mathrm{Attr}(\Phi)$ of $\Phi$. 
An important subspace of $\Attr(\Phi)$ is the \emph{fixed-point subspace} of $\Phi$
\begin{equation}
\Fix(\Phi) = \{ X \in \mathcal{B}(\mathcal{H}) \,\vert\, \Phi(X)=X \}\;, 
\label{fix_def}
\end{equation}
i.e. the space of invariant observables of the system.
Once the attractor subspace has been defined, the asymptotic dynamics is described in our setting by the \emph{asymptotic map} of $\Phi$
\begin{equation}
\hat{\Phi}_{\mathrm{P}} = \Phi \vert_{\Attr(\Phi)}\;.
\end{equation}

By~\eqref{attr_def} the attractor subspace is a vector space. Now, in order to compare it with the decoherence-free algebra of $\Phi$, which we will introduce in Subsect.~\ref{alg-intro_subsec}, it is important to understand whether $\Attr(\Phi)$ is an algebra with respect to the composition product of $\mathcal B(\mathcal H)$. We introduce the following notion, which is equivalent to the notion introduced in~\cite[Definition 1.1]{rajaramaperipheral_2}.
\begin{definition}
\label{per_aut_def}
	Let $\Phi$ be a UCP map. Then $\Phi$ is called peripherally automorphic if and only if $\Attr(\Phi)$ is closed under the composition product, i.e.
	\begin{equation}
		\label{per_aut_eq}
		X\;,\,Y \in \Attr(\Phi) \Rightarrow XY \in \Attr(\Phi)\;.
	\end{equation}
\end{definition} 
\begin{remark}
If $\Phi$ is peripherally automorphic, then $\Attr(\Phi)$ is a $C^\ast$-subalgebra of $\mathcal{B}(\mathcal{H})$.  
\end{remark} 
The name given to this class of maps may be easily justified thanks to the following proposition \cite[Theorem 2.10]{rajaramaperipheral_2}.
\begin{proposition}
			\label{ch_per_aut_1}
			Let $\Phi$ be a UCP map with asymptotic map $\hat{\Phi}_P$. The following conditions are equivalent:
			\begin{enumerate}[a)]
				\item $\Phi$ is peripherally automorphic;
				\item $\hat{\Phi}_{\mathrm{P}}(XY)=\hat{\Phi}_{\mathrm{P}}(X)\hat{\Phi}_{\mathrm{P}}(Y)$ for all $X,Y\in \Attr(\Phi)$.
			\end{enumerate}
		\end{proposition}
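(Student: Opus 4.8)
The plan is to prove the two implications separately, relying throughout on three facts already recorded above: that $\Attr(\Phi)$ is invariant under $\Phi$ by~\eqref{attr_inva}; that $\Attr(\Phi)=\Ran(\mathcal{P}_{\mathrm{P}})$; and that $\mathcal{P}_{\mathrm{P}}=\lim_{i\to\infty}\Phi^{n_i}$ along some increasing sequence $(n_i)$ by~\eqref{P_p_for}. Restricting this last identity to $\Attr(\Phi)$, where the idempotent $\mathcal{P}_{\mathrm{P}}$ acts as the identity, gives $\lim_i\hat{\Phi}_{\mathrm{P}}^{\,n_i}=\mathrm{id}_{\Attr(\Phi)}$; since $\hat{\Phi}_{\mathrm{P}}$ is a linear bijection of $\Attr(\Phi)$ onto itself (its eigenvalues are unimodular by~\eqref{attr_def}, so none is zero, and $\Phi(\Attr(\Phi))=\Attr(\Phi)$), composing with $\hat{\Phi}_{\mathrm{P}}^{-1}$ yields $\hat{\Phi}_{\mathrm{P}}^{-1}=\lim_i\hat{\Phi}_{\mathrm{P}}^{\,n_i-1}$. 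These are the only analytic ingredients; the remaining tool is the Kadison--Schwarz inequality $\psi(A)^\ast\psi(A)\leqslant\psi(A^\ast A)$, valid for every unital $2$-positive map $\psi$, in particular for every UCP map.

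For b)$\,\Rightarrow\,$a), assume b), i.e.\ $\Phi(XY)=\Phi(X)\Phi(Y)$ for all $X,Y\in\Attr(\Phi)$ (recall $\hat{\Phi}_{\mathrm{P}}=\Phi|_{\Attr(\Phi)}$). I would first bootstrap this to $\Phi^{k}(XY)=\Phi^{k}(X)\,\Phi^{k}(Y)$ for every $k\in\mathbb{N}$, by induction on $k$: the inductive step applies the hypothesis to the pair $\Phi^{k}(X),\Phi^{k}(Y)$, which still lies in $\Attr(\Phi)$ by the invariance~\eqref{attr_inva}. Evaluating this identity along $k=n_i$ and letting $i\to\infty$, continuity of multiplication on $\mathcal{B}(\mathcal{H})$ gives $\mathcal{P}_{\mathrm{P}}(XY)=\mathcal{P}_{\mathrm{P}}(X)\,\mathcal{P}_{\mathrm{P}}(Y)=XY$, the last equality because $X,Y\in\Ran(\mathcal{P}_{\mathrm{P}})$ and $\mathcal{P}_{\mathrm{P}}$ is idempotent. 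Hence $XY\in\Ran(\mathcal{P}_{\mathrm{P}})=\Attr(\Phi)$, which is a).

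For a)$\,\Rightarrow\,$b), the Remark tells us that $\Attr(\Phi)$ is a unital $C^\ast$-subalgebra of $\mathcal{B}(\mathcal{H})$, so $\hat{\Phi}_{\mathrm{P}}$ is a unital completely positive \emph{bijection} of this algebra onto itself. The crucial observation is that $\hat{\Phi}_{\mathrm{P}}^{-1}$ is completely positive as well, being the limit $\lim_i\hat{\Phi}_{\mathrm{P}}^{\,n_i-1}$ of UCP maps (the UCP property passes to limits in finite dimension). Now apply the Kadison--Schwarz inequality to $\hat{\Phi}_{\mathrm{P}}^{-1}$ at the point $\hat{\Phi}_{\mathrm{P}}(A)$, $A\in\Attr(\Phi)$, and push the resulting inequality forward through the positive map $\hat{\Phi}_{\mathrm{P}}$: this gives $\hat{\Phi}_{\mathrm{P}}(A)^\ast\hat{\Phi}_{\mathrm{P}}(A)\geqslant\hat{\Phi}_{\mathrm{P}}(A^\ast A)$, which together with the Kadison--Schwarz inequality for $\hat{\Phi}_{\mathrm{P}}$ itself forces equality $\hat{\Phi}_{\mathrm{P}}(A^\ast A)=\hat{\Phi}_{\mathrm{P}}(A)^\ast\hat{\Phi}_{\mathrm{P}}(A)$ for all $A\in\Attr(\Phi)$; replacing $A$ by $A^\ast$ and using that $\hat{\Phi}_{\mathrm{P}}$ preserves adjoints yields also $\hat{\Phi}_{\mathrm{P}}(AA^\ast)=\hat{\Phi}_{\mathrm{P}}(A)\hat{\Phi}_{\mathrm{P}}(A)^\ast$. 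Thus the multiplicative domain of $\hat{\Phi}_{\mathrm{P}}$ is the whole of $\Attr(\Phi)$, and a polarization argument (equivalently, Choi's multiplicative-domain theorem) upgrades this to $\hat{\Phi}_{\mathrm{P}}(XY)=\hat{\Phi}_{\mathrm{P}}(X)\hat{\Phi}_{\mathrm{P}}(Y)$ for all $X,Y\in\Attr(\Phi)$, which is b).

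The step I expect to need the most care is this last one in a)$\,\Rightarrow\,$b): passing from equality in the Schwarz inequality to full multiplicativity of $\hat{\Phi}_{\mathrm{P}}$. One must keep track of the $\ast$-structure so as to obtain both the $A^\ast A$ and the $AA^\ast$ identities, and check that the double application of Kadison--Schwarz genuinely delivers equality for \emph{every} element of $\Attr(\Phi)$ rather than only on an a priori multiplicative subalgebra. Everything else reduces to the spectral bookkeeping collected in the first paragraph.
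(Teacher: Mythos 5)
Your proof is correct. Note that the paper itself does not prove this proposition --- it quotes it from the literature (Theorem 2.10 of the cited reference on peripherally automorphic maps) --- so there is no in-text argument to compare against; judged on its own, your reasoning is sound and self-contained. The b)$\Rightarrow$a) direction (bootstrap by induction using the invariance \eqref{attr_inva}, pass to the limit along $\Phi^{n_i}\to\mathcal{P}_{\mathrm{P}}$, and use that the idempotent $\mathcal{P}_{\mathrm{P}}$ is the identity on its range $\Attr(\Phi)$) is exactly right, and you correctly read $\hat{\Phi}_{\mathrm{P}}(XY)$ as $\Phi(XY)$ so that the hypothesis makes sense before one knows $XY\in\Attr(\Phi)$. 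The a)$\Rightarrow$b) direction is the standard ``UCP bijection with UCP inverse is a $*$-isomorphism'' argument: $\hat{\Phi}_{\mathrm{P}}^{-1}=\lim_i\hat{\Phi}_{\mathrm{P}}^{\,n_i-1}$ is UCP as a limit of UCP maps on the finite-dimensional $C^\ast$-algebra $\Attr(\Phi)$, the two Kadison--Schwarz inequalities squeeze to equality $\hat{\Phi}_{\mathrm{P}}(A^\ast A)=\hat{\Phi}_{\mathrm{P}}(A)^\ast\hat{\Phi}_{\mathrm{P}}(A)$ (and its $AA^\ast$ twin) for every $A$, and Choi's multiplicative-domain theorem, or the polarization you sketch together with $*$-preservation, upgrades this to full multiplicativity. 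The only points worth making explicit in a written version are the ones you already gesture at: that $\Attr(\Phi)$ is a unital $C^\ast$-subalgebra under a) (so that $\hat{\Phi}_{\mathrm{P}}(A)^\ast\hat{\Phi}_{\mathrm{P}}(A)$ lies in the domain of $\hat{\Phi}_{\mathrm{P}}^{-1}$), and that $\hat{\Phi}_{\mathrm{P}}$ is invertible on $\Attr(\Phi)$ because it is the restriction to the span of eigenvectors with unimodular eigenvalues and maps $\Attr(\Phi)$ onto itself.
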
 
We now temporarily move to the Schr\"odinger picture, where we can define the attractor subspace $\Attr(\Phi^\dagger)$ as well as the fixed-point space $\Fix(\Phi^\dagger)$ of the quantum channel $\Phi^\dagger$ using~\eqref{attr_def} and~\eqref{fix_def} respectively. The assumption of \emph{faithfulness}, introduced in the following definition, is a property concerning $\Fix(\Phi^\dagger)$ which has profound consequences on the asymptotics in both pictures, as we shall see below. 
\begin{definition}
	A UCP map $\Phi$ is said to be faithful if there exists an invertible stationary state $\varrho$ for the CPTP map $\Phi^\dagger$, the Hilbert-Schmidt adjoint of $\Phi$, i.e.
	\begin{equation}
		\Phi^\dagger (\varrho)=\varrho >0\;.
	\end{equation} 
\end{definition} 
Importantly, peripheral automorphism is a weaker condition than faithfulness~\cite[Corollary 2.8]{rajaramaperipheral_2}.
\begin{proposition}
\label{faih_impl_per_aut}
Let $\Phi$ be a faithful UCP map. Then it is peripherally automorphic.
\end{proposition}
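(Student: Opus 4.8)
The plan is to show that faithfulness forces $\Attr(\Phi)$ to be closed under the composition product, which by Definition~\ref{per_aut_def} is exactly peripheral automorphism. The natural tool is a Schwarz-type (Kadison) inequality together with the fact that $\varrho$ supplies a faithful state on the attractor subspace on which $\Phi$ acts isometrically in a suitable sense. First I would recall that a UCP map satisfies the Kadison–Schwarz inequality $\Phi(X^\ast X)\geqslant \Phi(X)^\ast\Phi(X)$ for all $X\in\mathcal B(\mathcal H)$, with equality precisely when $X$ lies in the multiplicative domain of $\Phi$. Hence it suffices to prove that every eigenoperator $X$ with $\Phi(X)=\lambda X$, $|\lambda|=1$, lies in the multiplicative domain of $\Phi$, i.e. $\Phi(X^\ast X)=\Phi(X)^\ast\Phi(X)$ and $\Phi(XX^\ast)=\Phi(X)\Phi(X)^\ast$; then for two eigenoperators $X,Y$ one gets $\Phi(XY)=\Phi(X)\Phi(Y)=\lambda_X\lambda_Y\,XY$ up to the usual polarization argument, so $XY\in\Attr(\Phi)$.

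The key step is the multiplicative-domain claim, and here is where faithfulness enters. Let $\varrho>0$ be the invariant state of $\Phi^\dagger$. Consider the positive operator $D=\Phi(X^\ast X)-\Phi(X)^\ast\Phi(X)\geqslant 0$. I would evaluate $\tr(\varrho D)$. Using $\Phi^\dagger(\varrho)=\varrho$ we have $\tr(\varrho\,\Phi(X^\ast X))=\tr(\varrho\, X^\ast X)$, while $\tr(\varrho\,\Phi(X)^\ast\Phi(X))=|\lambda|^2\tr(\varrho\, X^\ast X)=\tr(\varrho\, X^\ast X)$ since $|\lambda|=1$. Therefore $\tr(\varrho D)=0$, and because $\varrho>0$ is invertible and $D\geqslant 0$, this forces $D=0$. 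The same argument with $X$ replaced by $X^\ast$ (which is an eigenoperator for $\bar\lambda$, again unimodular, by property b) of the spectrum) gives $\Phi(XX^\ast)=\Phi(X)\Phi(X)^\ast$. Thus every peripheral eigenoperator is in the two-sided multiplicative domain of $\Phi$.

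To finish, I would invoke the standard fact that the multiplicative domain of a UCP map is a $\ast$-subalgebra on which $\Phi$ is a $\ast$-homomorphism; since all peripheral eigenoperators lie in it, their linear span $\Attr(\Phi)$ is contained in the multiplicative domain, hence closed under products, and $\Phi$ restricted to it is multiplicative. This is condition~\eqref{per_aut_eq}, so $\Phi$ is peripherally automorphic, and one may additionally note via Proposition~\ref{ch_per_aut_1} that $\hat\Phi_{\mathrm P}$ is then multiplicative. The main obstacle I anticipate is being careful about the direction of the Schwarz inequality and the need to control \emph{both} $X^\ast X$ and $XX^\ast$ (a one-sided multiplicative-domain condition is not enough to conclude products of eigenoperators stay in $\Attr(\Phi)$); invoking property b) of $\spec(\Phi)$ to handle $X^\ast$ is the clean way around this, and the invertibility of $\varrho$ is exactly what upgrades "trace against $\varrho$ vanishes" to "operator vanishes."
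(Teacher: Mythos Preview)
The paper does not actually supply a proof of Proposition~\ref{faih_impl_per_aut}; it simply cites \cite[Corollary~2.8]{rajaramaperipheral_2}. Your argument is correct and is in fact the standard route to this result: the Kadison--Schwarz deficit $D=\Phi(X^\ast X)-\Phi(X)^\ast\Phi(X)\geqslant 0$ has vanishing $\varrho$-expectation for any peripheral eigenoperator $X$, and invertibility of $\varrho$ forces $D=0$; together with the analogous statement for $XX^\ast$ this places $X$ in the multiplicative domain of $\Phi$, from which closure of $\Attr(\Phi)$ under products follows because products of peripheral eigenoperators are again peripheral eigenoperators (or zero).

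Two minor remarks. First, the phrase ``up to the usual polarization argument'' is slightly misplaced: what you are really invoking is Choi's characterization of the multiplicative domain, which you state correctly a few lines later. Second, note that the very same mechanism (positivity of a Schwarz defect plus $\tr(\varrho\,\cdot)=0$ with $\varrho>0$) is exactly what the paper uses in its proof of Theorem~\ref{attr_ch_faith} to show the reverse inclusion $\mathcal N\subseteq\Attr(\Phi)$, so your approach is fully consistent with the techniques employed elsewhere in the paper.
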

\begin{remark}
A peripherally automorphic map $\Phi$ may not be faithful, see~\cite[Example 2.4]{rajaramaperipheral_2}.
\end{remark}
\subsection{Algebraic approach to the asymptotics}
\label{alg-intro_subsec}
In this subsection, we will discuss an alternative approach to study the long-time evolution of an open quantum system. This approach addresses the algebraic properties of the asymptotics. Let us recall the definition and some basic properties of the decoherence-free algebra of $\Phi$, i.e.\ the asymptotic subspace of the dynamics in this approach. The starting point of the analysis is the following: the discrete-time semigroup $(\Phi_n)_{n\in\mathbb{N}}$ describing the evolution of a closed quantum system is made up of $\ast$-automorphisms of $\mathcal{B}(\mathcal{H})$, and since we expect open-system asymptotic dynamics to mimic a closed-system evolution, the asymptotic semigroup should involve $*$-automorphisms of the asymptotic algebra.

The \emph{decoherence-free algebra} of a UCP map $\Phi$ is defined as
\begin{eqnarray}
\mathcal{N} \equiv \{ X \in \mathcal{B}(\mathcal{H}) \,\vert\, \Phi^{n}(Y X) &=& \Phi^{n}(Y)\Phi^{n}(X)\;,\nonumber\\ \Phi^{n}(XY) &=& \Phi^{n}(X)\Phi^{n}(Y) \;,\, \forall Y \in \mathcal{B}(\mathcal{H})\;,\,\forall n \in \mathbb{N} \}\;.
\label{dec_def}
\end{eqnarray}
It is well-known that $\mathcal{N}$ is a ${C}^\ast$-subalgebra of $\mathcal{B}(\mathcal{H})$ invariant under the action of $\Phi$, 
\begin{equation}
	\Phi(\mathcal{N})\subseteq\mathcal{N}\;,
\end{equation}
see e.g.~\cite{carbone2020period}.
The operator Schwarz inequality 
\begin{equation}
\label{op_Schw_ineq}
\Phi(X^\ast X) \geqslant\Phi(X)^\ast \Phi(X)\;, \quad X \in \mathcal{B}(\mathcal{H})\;,
\end{equation}
which holds for any UCP map, implies the following simple characterization of the space $\mathcal{N}$, see~\cite[Theorem 5.4]{wolf2012quantum}.
\begin{proposition}
\label{ch_N}
Let $\Phi$ be a UCP map. Then 
\begin{eqnarray}
	\mathcal{N}=\{ X \in \mathcal{B}(\mathcal{H}) \,\vert\, \Phi^{n}(X^\ast X)
	&=& \Phi^{n}(X)^\ast\Phi^{n}(X)\;,\nonumber\\ \Phi^{n}(XX^\ast ) 
	&=& \Phi^{n}(X)\Phi^{n}(X)^\ast\;,\, \forall n \in \mathbb{N} \}\;.
\end{eqnarray}
\end{proposition}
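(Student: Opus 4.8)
The plan is to prove the two inclusions separately. The inclusion of $\mathcal N$ in the right-hand side is immediate: if $X\in\mathcal N$, it suffices to put $Y=X^\ast$ in the two identities of~\eqref{dec_def} and to recall that a positive (hence $\ast$-preserving) map satisfies $\Phi^{n}(X^\ast)=\Phi^{n}(X)^\ast$, which turns those identities into $\Phi^{n}(X^\ast X)=\Phi^{n}(X)^\ast\Phi^{n}(X)$ and $\Phi^{n}(XX^\ast)=\Phi^{n}(X)\Phi^{n}(X)^\ast$ for all $n\in\mathbb N$. The substantive content is the reverse inclusion, which is the classical multiplicative-domain argument underlying the operator Schwarz inequality~\eqref{op_Schw_ineq}.

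For the converse I would fix $X$ satisfying both displayed equalities for every $n$, fix $n$, and abbreviate $\Psi:=\Phi^{n}$, which is again a UCP map. The idea is to pass to a minimal Stinespring dilation $\Psi(\cdot)=V^\ast\pi(\cdot)V$, with $\pi$ a unital $\ast$-representation of $\mathcal B(\mathcal H)$ on some Hilbert space $\mathcal K$ and $V\colon\mathcal H\to\mathcal K$ an isometry (an isometry precisely because $\Psi$ is unital), so that $P:=VV^\ast$ is an orthogonal projection. A short computation gives the identity $\Psi(X^\ast X)-\Psi(X)^\ast\Psi(X)=\bigl[(\mathbb{1}-P)\pi(X)V\bigr]^\ast\bigl[(\mathbb{1}-P)\pi(X)V\bigr]$, so the first hypothesis forces $(\mathbb{1}-P)\pi(X)V=0$, i.e.\ the intertwining relation $\pi(X)V=V\Psi(X)$; running the same argument on $X^\ast$ (using again that $\Psi$ is $\ast$-preserving) together with the second hypothesis yields $\pi(X^\ast)V=V\Psi(X)^\ast$, equivalently $V^\ast\pi(X)=\Psi(X)V^\ast$. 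Feeding these two relations into $\Psi(XY)=V^\ast\pi(X)\pi(Y)V$ and $\Psi(YX)=V^\ast\pi(Y)\pi(X)V$ produces at once $\Psi(XY)=\Psi(X)\Psi(Y)$ and $\Psi(YX)=\Psi(Y)\Psi(X)$ for every $Y\in\mathcal B(\mathcal H)$; since $n$ was arbitrary, $X\in\mathcal N$.

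An equivalent route, which I would mention as an alternative avoiding dilations, is to argue directly from the $2$-positivity of $\Psi$: the map $(A,B)\mapsto\Psi(A^\ast B)-\Psi(A)^\ast\Psi(B)$ is an operator-valued positive sesquilinear form, in the sense that its Gram matrices are positive semidefinite; since a positive $2\times 2$ operator matrix with a vanishing diagonal entry has vanishing off-diagonal entries, the vanishing of $\Psi(X^\ast X)-\Psi(X)^\ast\Psi(X)$ gives $\Psi(X^\ast Y)=\Psi(X)^\ast\Psi(Y)$ for all $Y$, and replacing $X$ by $X^\ast$ (legitimate thanks to the second hypothesis) gives $\Psi(XY)=\Psi(X)\Psi(Y)$.

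The step I expect to be the crux is exactly this rigidity phenomenon: the bare scalar inequality~\eqref{op_Schw_ineq} is not sufficient, and one must upgrade to its matricial (or dilated) form to see that saturation of the Schwarz inequality at $X$ propagates to multiplicativity of $\Psi$ against every $Y$. It is also worth recording, to motivate why the statement keeps both conditions, that each hypothesis in isolation yields only one-sided multiplicativity — $\Phi^{n}(XY)=\Phi^{n}(X)\Phi^{n}(Y)$ from the $XX^\ast$ relation and $\Phi^{n}(YX)=\Phi^{n}(Y)\Phi^{n}(X)$ from the $X^\ast X$ relation — so both are genuinely needed to land back in $\mathcal N$.
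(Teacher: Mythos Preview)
Your argument is correct: both the Stinespring route and the $2$-positivity/Cauchy--Schwarz alternative are the standard Choi multiplicative-domain proof, and your bookkeeping of which hypothesis yields which one-sided multiplicativity is accurate. The paper itself does not supply a proof of Proposition~\ref{ch_N} but refers the reader to~\cite[Theorem~5.4]{wolf2012quantum}, so there is no in-paper argument to compare against; your proof is exactly the one underlying that reference.
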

In order to describe the asymptotic evolution in this setting, let us introduce the \emph{decoherence-free map} of $\Phi$
\begin{equation}
\label{dec-free_map}
\Phi_{\mathcal{N}} = \Phi \vert_{\mathcal{N}}\;,
\end{equation}
and the \emph{decoherence-free semigroup} 
\begin{equation}
\label{dec-free-sem}
\Phi_{\mathcal{N},n}:= \Phi^n \vert_{\mathcal{N}}\;, \quad n \in\mathbb{N}\;.
\end{equation}
 The following corollary is an immediate consequence of the previous results.
\begin{corollary}
\label{largest_N_cor}
Let $\Phi$ be a UCP map. Then its decoherence-free algebra is the largest $C^\ast$-subalgebra $\mathcal{M}$ of $\mathcal{B}(\mathcal{H})$ invariant under $\Phi$ for which the semigroup $(\Phi_{\mathcal{M},n})_{n\in\mathbb{N}}$, defined in~\eqref{dec-free-sem}, is made up of $\ast$-homomorphisms. 
\end{corollary}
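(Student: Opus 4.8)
The plan is to prove the two required properties separately: first that $\mathcal{N}$ itself is a $C^\ast$-subalgebra invariant under $\Phi$ on which the semigroup acts by $\ast$-homomorphisms, and second that it is the \emph{largest} such subalgebra. The first property is already stated in the excerpt: $\mathcal{N}$ is a $C^\ast$-subalgebra, $\Phi(\mathcal{N})\subseteq\mathcal{N}$, and by the very definition~\eqref{dec_def} each $\Phi^n$ restricted to $\mathcal{N}$ is multiplicative, hence (being automatically $\ast$-preserving and linear since $\Phi$ is positive and unital) a $\ast$-homomorphism. So $\mathcal{N}$ is a legitimate candidate for the maximum.

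For the maximality, I would take an arbitrary $C^\ast$-subalgebra $\mathcal{M}\subseteq\mathcal{B}(\mathcal{H})$ with $\Phi(\mathcal{M})\subseteq\mathcal{M}$ such that each $\Phi_{\mathcal{M},n}=\Phi^n|_{\mathcal{M}}$ is a $\ast$-homomorphism, and show $\mathcal{M}\subseteq\mathcal{N}$. Fix $X\in\mathcal{M}$. Invariance gives $X^\ast X\in\mathcal{M}$ for every $n$, so $\Phi^n(X^\ast X)$ makes sense and, since $\Phi^n|_{\mathcal{M}}$ is multiplicative and $\ast$-preserving, $\Phi^n(X^\ast X)=\Phi^n(X^\ast)\Phi^n(X)=\Phi^n(X)^\ast\Phi^n(X)$; likewise $\Phi^n(XX^\ast)=\Phi^n(X)\Phi^n(X)^\ast$. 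These are exactly the two identities appearing in the characterization of $\mathcal{N}$ given in Proposition~\ref{ch_N}, valid for all $n\in\mathbb{N}$. Hence $X\in\mathcal{N}$, and therefore $\mathcal{M}\subseteq\mathcal{N}$, which is the desired maximality. Combined with the first part, $\mathcal{N}$ is the largest $C^\ast$-subalgebra invariant under $\Phi$ on which the semigroup consists of $\ast$-homomorphisms.

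The main subtlety — really the only place one must be careful — is ensuring that the multiplicativity of $\Phi^n|_{\mathcal{M}}$ is applied only to products of elements \emph{inside} $\mathcal{M}$. This is why the hypothesis $\Phi(\mathcal{M})\subseteq\mathcal{M}$ is needed: it guarantees $X^\ast X, XX^\ast\in\mathcal{M}$ so that $\Phi^n$ acts homomorphically on them, and it also guarantees the semigroup property $\Phi_{\mathcal{M},n}=(\Phi_{\mathcal{M},1})^n$ so that ``each $\Phi_{\mathcal{M},n}$ is a $\ast$-homomorphism'' is even a meaningful, self-consistent requirement. No Schwarz inequality or spectral input is needed beyond what Proposition~\ref{ch_N} already packages; the corollary is essentially a restatement of that proposition together with the known algebraic properties of $\mathcal{N}$, so the proof is short.
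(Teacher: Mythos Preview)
Your proof is correct and follows essentially the same route as the paper's: show that any $\Phi$-invariant $C^\ast$-subalgebra $\mathcal{M}$ on which the semigroup acts by $\ast$-homomorphisms satisfies the two identities in Proposition~\ref{ch_N}, hence $\mathcal{M}\subseteq\mathcal{N}$. One small expository slip: the fact that $X^\ast X,\, XX^\ast\in\mathcal{M}$ comes from $\mathcal{M}$ being a $C^\ast$-algebra, not from $\Phi$-invariance (the latter is what makes $(\Phi_{\mathcal{M},n})_n$ a genuine semigroup), but this does not affect the argument.
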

\begin{proof}
First, note that the invariance of $\mathcal{M}$ under $\Phi$ implies that  $\Phi_{\mathcal{M},n} = \Phi\vert_{\mathcal{M}}^n$ is a semigroup. 
Take $X\in\mathcal M$, implying $X^*\in\mathcal M$ too, so that for all $n\in\mathbb N$:
\begin{eqnarray}
\Phi^{n}(X^* X) &=& \Phi_{\mathcal{M},n}(X^* X) = \Phi_{\mathcal{M},n}(X)^*\Phi_{\mathcal{M},n}(X) =  \Phi^{n}(X)^*\Phi^{n}(X)\;, 
\\ 
\Phi^{n}(XX^*) &=&  \Phi_{\mathcal{M},n}( X X^*) = \Phi_{\mathcal{M},n}(X)\Phi_{\mathcal{M},n}(X)^* = \Phi^{n}(X)\Phi^{n}(X)^*\ .
\end{eqnarray}
Thus, from Proposition~\ref{ch_N}, $X\in\mathcal N$ and the assertion follows. \qed
\end{proof}
\begin{remark}
\label{Phi_not_auto_rmk}
As already highlighted in~\cite[Remark 1]{carbone2020period}, $\Phi_{\mathcal{N},n}$ is not generally made up of $\ast$-automorphisms.
\end{remark}
\section{Interplay between the spectral and the algebraic framework}
\label{sec:interplay} 
In this section we are going to analyze the connection between the spectral and the algebraic approach. In particular, we aim at investigating the relation between the attractor subspace and the decoherence-free algebra, i.e. the asymptotic subspaces in the two schemes. Before stating the first main result of the section, we need a preparatory lemma~\cite[Lemma 2.4]{hamana1979injective}. 
\begin{lemma}
		\label{lemma_Hamana}
		Let $\Phi$ be an idempotent UCP map, namely $\Phi^2=\Phi$. Then
		\begin{equation}
			\Phi(\Phi(X)\Phi(Y))=\Phi(\Phi(X)Y)=\Phi(X\Phi(Y))\;, \quad X\;,\,Y \in\mathcal{B}(\mathcal{H})\;.
		\end{equation} 
\end{lemma}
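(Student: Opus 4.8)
The plan is to collapse the three-fold identity into a single ``module'' relation over $\Fix(\Phi)$ and then extract that relation from the operator Schwarz inequality~\eqref{op_Schw_ineq}, applied to a $2\times 2$ amplification of $\Phi$ and iterated once in order to cash in idempotency.

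As a preliminary step I would record two elementary facts valid for any idempotent UCP map $\Phi$: since $\Phi^2=\Phi$, one has $\Ran(\Phi)=\Fix(\Phi)$, and this subspace is $\ast$-closed because $\Phi$ is $\ast$-preserving; moreover, $Y-\Phi(Y)\in\Ker\Phi$ for every $Y$, again by idempotency, and conversely every element of $\Ker\Phi$ is of this form. Setting $b:=\Phi(X)\in\Fix(\Phi)$ and $W:=Y-\Phi(Y)\in\Ker\Phi$, the identity $\Phi(\Phi(X)Y)=\Phi(\Phi(X)\Phi(Y))$ is exactly $\Phi(bW)=0$; and once $\Phi(bW)=0$ is known for all $b\in\Fix(\Phi)$ and $W\in\Ker\Phi$, taking adjoints gives $\Phi(W^{\ast}b^{\ast})=0$ as well, with $W^{\ast}\in\Ker\Phi$ and $b^{\ast}\in\Fix(\Phi)$, which is precisely the remaining identity $\Phi(X\Phi(Y))=\Phi(\Phi(X)\Phi(Y))$. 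Hence everything reduces to the single claim: \emph{if $b\in\Fix(\Phi)$ and $W\in\Ker\Phi$, then $\Phi(b^{\ast}W)=0$} (equivalently $\Phi(bW)=0$, since $b^{\ast}\in\Fix(\Phi)$).

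To prove the claim I would pass to $\Phi_2:=\Phi\otimes\mathrm{id}_{M_2}$, which is again a UCP map on the $2\times 2$ operator matrices over $\mathcal{B}(\mathcal{H})$, is idempotent, and so obeys~\eqref{op_Schw_ineq}. Applying the Schwarz inequality to $S=\begin{pmatrix} b & W\\ 0 & 0\end{pmatrix}$, and using $\Phi_2(S)=\begin{pmatrix} b & 0\\ 0 & 0\end{pmatrix}$, yields
\[
\Phi_2(S^{\ast}S)-\Phi_2(S)^{\ast}\Phi_2(S)=\begin{pmatrix} \Phi(b^{\ast}b)-b^{\ast}b & \Phi(b^{\ast}W)\\ \Phi(W^{\ast}b) & \Phi(W^{\ast}W)\end{pmatrix}\ \geqslant\ 0 .
\]
Now I would apply $\Phi_2$ to this positive matrix once more: by idempotency the $(1,1)$ entry becomes $\Phi\big(\Phi(b^{\ast}b)-b^{\ast}b\big)=0$, while the other three entries are left unchanged, so positivity of $\Phi_2$ gives $\begin{pmatrix} 0 & \Phi(b^{\ast}W)\\ \Phi(W^{\ast}b) & \Phi(W^{\ast}W)\end{pmatrix}\geqslant 0$. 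Since a positive $2\times 2$ operator matrix with a vanishing diagonal block necessarily has vanishing off-diagonal blocks, we conclude $\Phi(b^{\ast}W)=0$, which is the claim; combined with the reductions above, this proves the lemma.

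The one genuinely delicate point is this double use of $\Phi_2$. After the first Schwarz step the matrix does \emph{not} yet have a zero diagonal block — its $(1,1)$ entry $\Phi(b^{\ast}b)-b^{\ast}b$ is only known to be positive, again by~\eqref{op_Schw_ineq} applied to $\Phi$ itself — so the zero-diagonal-block argument is not available. It is the second application of $\Phi$, which annihilates that entry by idempotency while fixing the others, that unlocks the conclusion. Everything else (the reduction to the module relation, the amplification, the choice of the ``rank-one row'' matrix $S$, and the elementary fact about positive block matrices) is routine.
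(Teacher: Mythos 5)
Your proof is correct. Note that the paper does not actually prove Lemma~\ref{lemma_Hamana}: it is quoted verbatim from Hamana's work on injective envelopes, so there is no in-paper argument to compare against. Your reduction to the single ``module'' relation $\Phi(bW)=0$ for $b\in\Fix(\Phi)=\Ran(\Phi)$ and $W\in\Ker(\Phi)$ is sound (the adjoint trick handles the second identity because both $\Fix(\Phi)$ and $\Ker(\Phi)$ are $\ast$-closed and $\Phi$ is $\ast$-preserving), and the key step is handled correctly: the Schwarz inequality for the amplification $\Phi\otimes\mathrm{id}_{M_2}$ applied to the row matrix with entries $b,W$, followed by a second application of $\Phi$ which kills the $(1,1)$ entry $\Phi(b^{\ast}b)-b^{\ast}b$ by idempotency while fixing the remaining entries (they already lie in $\Ran(\Phi)=\Fix(\Phi)$), and finally the standard fact that a positive $2\times 2$ operator block matrix with a vanishing diagonal block has vanishing off-diagonal blocks. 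This is in substance the classical Choi--Effros/Hamana argument underlying the cited result (complete positivity is used exactly where it must be, to legitimize the $2\times2$ amplification), so your write-up amounts to a self-contained proof of the quoted lemma rather than a genuinely different route.
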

Under the faithfulness assumption the attractor subspace reduces to the decoherence-free algebra, and we can say that the spectral and algebraic approaches to the asymptotics become equivalent.
\begin{theorem}
\label{attr_ch_faith}
Let $\Phi$ be a faithful UCP map. Then
\begin{equation}
\Attr(\Phi)=\mathcal{N}\;.
\end{equation}
\end{theorem}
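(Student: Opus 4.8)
The plan is to prove the two inclusions separately, with faithfulness entering in an essential way only for $\mathcal{N}\subseteq\Attr(\Phi)$.

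For $\Attr(\Phi)\subseteq\mathcal{N}$ I would first record that $\Attr(\Phi)$ is closed under the adjoint (if $\Phi(X)=\lambda X$ with $|\lambda|=1$ then, $\Phi$ being positive and hence $\ast$-preserving, $\Phi(X^\ast)=\bar\lambda X^\ast$ with $|\bar\lambda|=1$) and is $\Phi$-invariant by~\eqref{attr_inva}. Since $\Phi$ is faithful, Proposition~\ref{faih_impl_per_aut} makes it peripherally automorphic, so by Proposition~\ref{ch_per_aut_1} the asymptotic map $\hat\Phi_{\mathrm P}=\Phi\vert_{\Attr(\Phi)}$ is multiplicative on $\Attr(\Phi)$. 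Iterating this identity together with the invariance $\Phi(\Attr(\Phi))=\Attr(\Phi)$ yields $\Phi^{n}(XY)=\Phi^{n}(X)\Phi^{n}(Y)$ for all $X,Y\in\Attr(\Phi)$ and all $n$; specializing to $Y=X^\ast$ and $Y=X$ and using once more that $\Phi^{n}$ is $\ast$-preserving gives $\Phi^{n}(X^\ast X)=\Phi^{n}(X)^\ast\Phi^{n}(X)$ and $\Phi^{n}(XX^\ast)=\Phi^{n}(X)\Phi^{n}(X)^\ast$ for every $n$, whence $X\in\mathcal{N}$ by Proposition~\ref{ch_N}.

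For $\mathcal{N}\subseteq\Attr(\Phi)$, let $\varrho>0$ satisfy $\Phi^\dagger(\varrho)=\varrho$. Since $\mathcal{N}$ is a $\Phi$-invariant $C^\ast$-subalgebra on which $\Phi$ acts as a $\ast$-homomorphism, $\alpha:=\Phi\vert_{\mathcal{N}}$ is a $\ast$-endomorphism of $\mathcal{N}$. It is injective: if $\alpha(X)=0$ then $\Phi(X^\ast X)=\Phi(X)^\ast\Phi(X)=0$, so $0=\tr(\varrho\,\Phi(X^\ast X))=\tr(\Phi^\dagger(\varrho)\,X^\ast X)=\tr(\varrho\,X^\ast X)$, and $\varrho>0$ forces $X=0$. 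As $\mathcal{N}$ is finite-dimensional, $\alpha$ is then a $\ast$-automorphism of $\mathcal{N}$. Next I would equip $\mathcal{N}$ with the inner product $\langle X,Y\rangle_\varrho:=\tr(\varrho\,X^\ast Y)$, which is positive definite because $\varrho>0$, and check that $\alpha$ is unitary for it: for $X,Y\in\mathcal{N}$, using that $\alpha$ is a $\ast$-homomorphism and $\Phi^\dagger(\varrho)=\varrho$,
\[
\langle\alpha X,\alpha Y\rangle_\varrho=\tr\!\big(\varrho\,\Phi(X)^\ast\Phi(Y)\big)=\tr\!\big(\varrho\,\Phi(X^\ast Y)\big)=\tr\!\big(\Phi^\dagger(\varrho)\,X^\ast Y\big)=\langle X,Y\rangle_\varrho .
\]
Hence $\alpha$ is a normal operator on the finite-dimensional Hilbert space $(\mathcal{N},\langle\cdot,\cdot\rangle_\varrho)$, so it is diagonalizable with eigenvalues on the unit circle. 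Every $X\in\mathcal{N}$ is therefore a linear combination of $\alpha$-eigenvectors $X_j\in\mathcal{N}$ with $\Phi(X_j)=\lambda_j X_j$, $|\lambda_j|=1$; each $X_j$ is an eigenoperator of $\Phi$ with $\lambda_j\in\spec_{\mathrm P}(\Phi)$, so $X_j\in\Attr(\Phi)$ and consequently $X\in\Attr(\Phi)$.

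The easy half is $\Attr(\Phi)\subseteq\mathcal{N}$, which only uses peripheral automorphism and the characterizations already established; I expect the main obstacle to be the reverse inclusion, and in particular the point that faithfulness is precisely what upgrades the $\ast$-endomorphism $\Phi\vert_{\mathcal N}$ to a $\ast$-automorphism — without it $\Phi\vert_{\mathcal N}$ may have a nontrivial kernel and $\mathcal N$ can be strictly larger than $\Attr(\Phi)$. (Lemma~\ref{lemma_Hamana}, applied to the idempotent UCP map $\mathcal P_{\mathrm P}$ of~\eqref{P_p_for}, provides an alternative route to the $C^\ast$-algebra structure on $\Attr(\Phi)$, but it is not needed once Proposition~\ref{ch_per_aut_1} is available.)
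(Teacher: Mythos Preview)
Your proof is correct. The inclusion $\Attr(\Phi)\subseteq\mathcal N$ matches the paper's argument essentially verbatim. For the reverse inclusion $\mathcal N\subseteq\Attr(\Phi)$, however, you take a genuinely different route. The paper passes the Schwarz equality on $\mathcal N$ to the limit via~\eqref{P_p_for} to obtain $\mathcal P_{\mathrm P}(X^\ast X)=\mathcal P_{\mathrm P}(X)^\ast\mathcal P_{\mathrm P}(X)$, then invokes Lemma~\ref{lemma_Hamana} (applied to the idempotent UCP map $\mathcal P_{\mathrm P}$) to deduce $\mathcal P_{\mathrm P}\bigl((X-\mathcal P_{\mathrm P}(X))^\ast(X-\mathcal P_{\mathrm P}(X))\bigr)=0$, and finally uses faithfulness to force $X=\mathcal P_{\mathrm P}(X)\in\Attr(\Phi)$. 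You instead show directly that $\Phi\vert_{\mathcal N}$ is an isometry (hence a unitary) for the $\varrho$-weighted inner product $\langle X,Y\rangle_\varrho=\tr(\varrho\,X^\ast Y)$, so that it is diagonalizable with unimodular spectrum, which immediately embeds $\mathcal N$ into $\Attr(\Phi)$. Your approach is more elementary---it bypasses Hamana's lemma altogether and yields the automorphism property of $\Phi\vert_{\mathcal N}$ along the way---while the paper's argument stays closer to the conditional-expectation viewpoint and adapts more readily to infinite dimensions, where diagonalizability of $\alpha$ is not automatic. One small correction to your closing remark: in the paper, Lemma~\ref{lemma_Hamana} is used not for the algebra structure on $\Attr(\Phi)$ but precisely for the inclusion $\mathcal N\subseteq\Attr(\Phi)$ that you handle by other means.
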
 
\begin{proof}
Given $X\in \Attr(\Phi)$, by combining Proposition~\ref{ch_per_aut_1} and Proposition~\ref{faih_impl_per_aut} we have
\begin{equation}
\Phi(X^\ast X)=\Phi(X)^\ast\Phi(X)\;.
\end{equation}
Since $\Attr(\Phi)$ is invariant under $\Phi$ by~\eqref{attr_inva}, we have
\begin{equation}
	\Phi^n(X^\ast X)=\Phi^n(X)^\ast\Phi^n(X)\;, \quad n\in\mathbb{N}\;.
\end{equation}
By exchanging $X \leftrightarrow X^\ast$, we get $\Attr(\Phi)\subseteq \mathcal N $. For the other inclusion, given $X \in \mathcal{N}$ and using~\eqref{P_p_for}
\begin{equation}
\Phi^n(X^\ast X)=\Phi^n(X)^\ast\Phi^n(X) \Rightarrow \mathcal{P}_{{\mathrm{P}}}(X^\ast X)=\mathcal{P}_{\mathrm{P}} (X)^\ast\mathcal{P}_{\mathrm{P}} (X)\;.
\end{equation}
Thus, by applying Lemma~\ref{lemma_Hamana} to $\mathcal{P}_{\mathrm{P}}$ we obtain 
\begin{equation}
\mathcal{P}_{\mathrm{P}} ((X- \mathcal{P}_{\mathrm{P}}(X))^\ast (X-\mathcal{P}_{\mathrm{P}}(X)))=0\;.
\end{equation}
Call $Y = X- \mathcal{P}_{\mathrm{P}}(X)$, and let $\varrho=\Phi^\dagger(\varrho)>0$ be an invertible invariant state for $\Phi^\dagger$. It will be invariant for $\mathcal{P}_\mathrm{P}^\dagger$ too by~\eqref{P_p_for}, and we have
\begin{equation}
\tr(\mathcal{P}_{\mathrm{P}}(Y^\ast Y) \varrho)=\mathrm{tr}(Y^\ast Y \mathcal{P}_{\mathrm{P}}^\dagger(\varrho))=\mathrm{tr}(Y^\ast Y \varrho)=0\;,
\end{equation}
which implies $Y^\ast Y=0$, and so $Y=0$. Thus, $X=\mathcal P_{\mathrm{P}}(X)$, and $\mathcal N\subseteq \Attr(\Phi)$.
\qed
\end{proof}
\begin{corollary}
Let $\Phi$ be a faithful UCP map. Then $\Phi \vert_{\mathcal{N}}$ is a $\ast$-automorphism.
\end{corollary}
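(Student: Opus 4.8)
The plan is to assemble the three results already proved above. First I would invoke Theorem~\ref{attr_ch_faith}: since $\Phi$ is faithful, $\mathcal{N}=\Attr(\Phi)$, so the decoherence-free map $\Phi_{\mathcal{N}}=\Phi\vert_{\mathcal{N}}$ coincides with the asymptotic map $\hat{\Phi}_{\mathrm{P}}=\Phi\vert_{\Attr(\Phi)}$. Next, by Proposition~\ref{faih_impl_per_aut}, faithfulness implies that $\Phi$ is peripherally automorphic, and then Proposition~\ref{ch_per_aut_1} gives $\hat{\Phi}_{\mathrm{P}}(XY)=\hat{\Phi}_{\mathrm{P}}(X)\hat{\Phi}_{\mathrm{P}}(Y)$ for all $X,Y\in\Attr(\Phi)=\mathcal{N}$; that is, $\Phi_{\mathcal{N}}$ is multiplicative on $\mathcal{N}$. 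Since $\Phi$ is unital and positive, it is $\ast$-preserving (write a self-adjoint element as a difference of positive ones, each of which is sent to a positive, hence self-adjoint, element), so $\Phi_{\mathcal{N}}$ is a unital $\ast$-homomorphism of the $C^\ast$-algebra $\mathcal{N}$.

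It then remains only to check bijectivity. Here I would use the invariance relation~\eqref{attr_inva}, $\Phi(\Attr(\Phi))=\Attr(\Phi)$, which combined with $\mathcal{N}=\Attr(\Phi)$ gives $\Phi_{\mathcal{N}}(\mathcal{N})=\mathcal{N}$. Thus $\Phi_{\mathcal{N}}$ is a surjective linear endomorphism of the finite-dimensional space $\mathcal{N}$, hence injective, hence a bijection. A bijective unital $\ast$-homomorphism of a $C^\ast$-algebra onto itself is by definition a $\ast$-automorphism, which proves the corollary.

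I do not anticipate a genuine obstacle: the statement is essentially a corollary by construction, obtained by chaining Theorem~\ref{attr_ch_faith}, Proposition~\ref{faih_impl_per_aut} and Proposition~\ref{ch_per_aut_1}. The only points meriting a line of care are the $\ast$-preservation of $\Phi$ (so that "multiplicative" upgrades to "$\ast$-homomorphism") and the finite-dimensionality argument turning surjectivity of $\Phi_{\mathcal{N}}$ into bijectivity; alternatively one may note that $\hat{\Phi}_{\mathrm{P}}$ is diagonalizable with all eigenvalues of modulus one and is therefore invertible on $\Attr(\Phi)$, giving the same conclusion.
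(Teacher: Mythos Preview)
Your proposal is correct and follows essentially the same route as the paper: the paper's proof simply says the assertion follows by combining Theorem~\ref{attr_ch_faith}, claim~b) of Proposition~\ref{ch_per_aut_1}, and the invertibility of $\hat{\Phi}_{\mathrm P}$. Your alternative observation that $\hat{\Phi}_{\mathrm P}$ has only unimodular eigenvalues and is therefore invertible on $\Attr(\Phi)$ is exactly what the paper invokes for bijectivity; the additional remarks on $\ast$-preservation and the surjectivity-plus-finite-dimension route are correct fill-ins of details the paper leaves implicit.
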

\begin{proof}
The assertion follows by combining Theorem~\ref{attr_ch_faith}, claim \emph{b)} of Proposition~\ref{ch_per_aut_1} and the invertibility of $\hat\Phi_\mathrm P$.
\qed
\end{proof}

The relation between the attractor subspace and the decoherence-free algebra may be derived for the larger class of peripherally automorphic UCP maps introduced in Definition~\ref{per_aut_def}, see~\cite[Theorem 2.6]{rajaramaperipheral_2} and~\cite[Remark 10]{fagnola2019role}.
\begin{theorem}
\label{mul_ch_pa_UCP}
Let $\Phi$ be a UCP map. Then, the following conditions are equivalent.
\begin{enumerate}[a)] 
\item $\Phi$ peripherally automorphic;
\item $\Attr(\Phi) \subseteq\mathcal{N}$.
\end{enumerate}
\end{theorem}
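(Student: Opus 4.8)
The plan is to prove the two implications separately, with the projection formula~\eqref{P_p_for}, $\mathcal{P}_{\mathrm{P}}=\lim_i\Phi^{n_i}$, as the common engine: it converts exact multiplicativity statements into asymptotic ones and back. The other ingredients are the characterisation of $\mathcal{N}$ in~\eqref{dec_def} and in Proposition~\ref{ch_N}, the multiplicativity criterion of Proposition~\ref{ch_per_aut_1}, and the invariance~\eqref{attr_inva} of $\Attr(\Phi)$. I will use throughout that $\Attr(\Phi)$ is $\ast$-closed, since $\Phi$ is Hermiticity-preserving: if $\Phi(X)=\lambda X$ with $|\lambda|=1$, then $\Phi(X^\ast)=\bar\lambda X^\ast$ and $\bar\lambda\in\spec_{\mathrm{P}}(\Phi)$ by property b).

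For a)$\,\Rightarrow\,$b), I would fix $X\in\Attr(\Phi)$, so that $X^\ast\in\Attr(\Phi)$ and, by peripheral automorphism, $X^\ast X,\,XX^\ast\in\Attr(\Phi)$. Applying claim b) of Proposition~\ref{ch_per_aut_1} to the pair $(X^\ast,X)$, together with $\hat{\Phi}_{\mathrm{P}}=\Phi\vert_{\Attr(\Phi)}$ and $\hat{\Phi}_{\mathrm{P}}(X^\ast)=\Phi(X)^\ast$, gives the one-step identities $\Phi(X^\ast X)=\Phi(X)^\ast\Phi(X)$ and $\Phi(XX^\ast)=\Phi(X)\Phi(X)^\ast$. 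The substantive step is to bootstrap these to all $n$: by induction on $n$ one shows simultaneously that $\Phi^n(X)\in\Attr(\Phi)$ (from invariance~\eqref{attr_inva}) and $\Phi^n(X^\ast X)=\Phi^n(X)^\ast\Phi^n(X)$, the inductive step being $\Phi^{n+1}(X^\ast X)=\Phi\big(\Phi^n(X)^\ast\Phi^n(X)\big)=\hat{\Phi}_{\mathrm{P}}\big(\Phi^n(X)^\ast\Phi^n(X)\big)=\Phi^{n+1}(X)^\ast\Phi^{n+1}(X)$, where $\Phi^n(X)^\ast\Phi^n(X)\in\Attr(\Phi)$ by peripheral automorphism. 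The same argument handles $XX^\ast$, and Proposition~\ref{ch_N} then yields $X\in\mathcal{N}$, hence $\Attr(\Phi)\subseteq\mathcal{N}$.

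For b)$\,\Rightarrow\,$a), by Definition~\ref{per_aut_def} it suffices to show $XY\in\Attr(\Phi)$ whenever $X,Y\in\Attr(\Phi)$. From $X\in\Attr(\Phi)\subseteq\mathcal{N}$ and the defining relation~\eqref{dec_def} of the decoherence-free algebra, $\Phi^n(XY)=\Phi^n(X)\Phi^n(Y)$ for every $n$. Evaluating along the subsequence $(n_i)$ of~\eqref{P_p_for} and using joint continuity of multiplication on the finite-dimensional algebra $\mathcal{B}(\mathcal{H})$, I get $\mathcal{P}_{\mathrm{P}}(XY)=\mathcal{P}_{\mathrm{P}}(X)\mathcal{P}_{\mathrm{P}}(Y)=XY$, the last equality because $X,Y\in\Ran(\mathcal{P}_{\mathrm{P}})=\Attr(\Phi)$ and $\mathcal{P}_{\mathrm{P}}^2=\mathcal{P}_{\mathrm{P}}$. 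Hence $XY\in\Ran(\mathcal{P}_{\mathrm{P}})=\Attr(\Phi)$, i.e.\ $\Phi$ is peripherally automorphic.

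The only real obstacle is the induction in a)$\,\Rightarrow\,$b): one must carefully propagate the fact that, after each application of $\Phi$, the products $\Phi^n(X)^\ast\Phi^n(X)$ and $\Phi^n(X)\Phi^n(X)^\ast$ remain in $\Attr(\Phi)$, so that the multiplicativity of $\hat{\Phi}_{\mathrm{P}}$ stays available at the next step; this is precisely where invariance~\eqref{attr_inva} and peripheral automorphism are jointly used. In the converse direction the one delicate point is that~\eqref{dec_def} is quantified over all $Y\in\mathcal{B}(\mathcal{H})$, which is exactly what licenses taking $Y\in\Attr(\Phi)$ and passing to the limit along $(n_i)$; the rest is continuity in finite dimension.
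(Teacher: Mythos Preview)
Your argument is correct. The paper itself does not prove this theorem, deferring instead to \cite[Theorem~2.6]{rajaramaperipheral_2} and \cite[Remark~10]{fagnola2019role}; your a)$\Rightarrow$b) direction is nonetheless exactly the argument appearing in the first half of the paper's proof of Theorem~\ref{attr_ch_faith}, where faithfulness enters only through Proposition~\ref{faih_impl_per_aut} to obtain peripheral automorphism, after which the paper iterates the multiplicativity of $\hat\Phi_{\mathrm P}$ via the invariance~\eqref{attr_inva} just as you do.

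For b)$\Rightarrow$a) your route through $\mathcal{P}_{\mathrm{P}}=\lim_i\Phi^{n_i}$ and continuity of multiplication is sound, but a shorter argument avoids limits altogether: by linearity it suffices to take peripheral eigenvectors $\Phi(X)=\lambda X$, $\Phi(Y)=\mu Y$ with $|\lambda|=|\mu|=1$; then $X\in\Attr(\Phi)\subseteq\mathcal{N}$ and~\eqref{dec_def} with $n=1$ give $\Phi(XY)=\Phi(X)\Phi(Y)=\lambda\mu\,XY$, so $XY$ is either zero or a peripheral eigenvector and hence lies in $\Attr(\Phi)$.
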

\begin{corollary}
Let $\Phi$ be a peripherally automorphic UCP map. Then $\Fix(\Phi) \subseteq \mathcal{N}$.
\end{corollary}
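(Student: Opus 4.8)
The plan is to derive this as an immediate corollary of Theorem~\ref{mul_ch_pa_UCP}, exploiting the elementary fact that the fixed-point subspace is always contained in the attractor subspace. First I would observe that $\Fix(\Phi) \subseteq \Attr(\Phi)$: if $\Phi(X) = X$, then $X$ is an eigenoperator of $\Phi$ with eigenvalue $\lambda = 1$, and by property a) of the spectrum we have $1 \in \mathrm{spect}_{\mathrm{P}}(\Phi)$, so $X$ belongs to the span defining $\Attr(\Phi)$ in~\eqref{attr_def}. This inclusion is unconditional and requires no hypothesis on $\Phi$.

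Next, since $\Phi$ is assumed peripherally automorphic, the equivalence in Theorem~\ref{mul_ch_pa_UCP} (the implication a)~$\Rightarrow$~b) suffices) gives $\Attr(\Phi) \subseteq \mathcal{N}$. Chaining the two inclusions yields $\Fix(\Phi) \subseteq \Attr(\Phi) \subseteq \mathcal{N}$, which is the claim. The argument is a two-line containment chain, so there is essentially no obstacle; the only point requiring a moment's care is the very first inclusion, and even that is transparent once one notes that $1$ is always a peripheral eigenvalue. A possible proof text:

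\begin{proof}
Since $1 \in \mathrm{spect}_{\mathrm{P}}(\Phi)$ by property a) of the spectrum, every fixed point of $\Phi$ is an eigenoperator relative to a peripheral eigenvalue, whence $\Fix(\Phi) \subseteq \Attr(\Phi)$ by~\eqref{attr_def}. As $\Phi$ is peripherally automorphic, Theorem~\ref{mul_ch_pa_UCP} gives $\Attr(\Phi) \subseteq \mathcal{N}$, and combining the two inclusions yields the assertion. \qed
\end{proof}

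One could alternatively note that $\Fix(\Phi) \subseteq \mathcal{N}$ admits a direct verification from Proposition~\ref{ch_N}: if $\Phi(X) = X$ then also $\Phi^n(X) = X$ and $\Phi^n(X^\ast) = X^\ast$ for all $n$, but this does \emph{not} immediately give $\Phi^n(X^\ast X) = \Phi^n(X)^\ast \Phi^n(X) = X^\ast X$ without a peripheral-automorphism type argument, so routing through Theorem~\ref{mul_ch_pa_UCP} is the cleanest path. I would keep the corollary's proof as the short containment chain above.
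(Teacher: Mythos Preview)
Your argument is correct and is exactly the intended one: the paper states the corollary immediately after Theorem~\ref{mul_ch_pa_UCP} without proof, treating it as an obvious consequence via the chain $\Fix(\Phi)\subseteq\Attr(\Phi)\subseteq\mathcal{N}$.
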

Therefore, if $\Phi$ is not peripherally automorphic, \emph{b)} does not hold and, indeed, it can happen that $\Fix(\Phi) \not \subseteq \mathcal{N}$, see~\cite[Example 3.6]{rajaramaperipheral_2}. 
\section{Markovian evolutions}
\label{Markov_sec}
As discussed in Sect.~\ref{intro}, Markovian continuous-time evolutions were the first ones addressed in the theory of open quantum systems, given their significance in areas such as quantum optics~\cite{breuerpetruccione}. In this section we will first discuss the basic properties of continuous-time evolutions, with particular focus on their asymptotics. Afterwards, we will revisit the connection between the spectral and the algebraic approach to the asymptotic evolution in the continuous setting.

First, recall that a quantum dynamical semigroup (QDS) $(\Phi_t)_{t\in\mathbb{R}^+}$ is a continuous one-parameter semigroup of UCP maps on $\mathcal{B}(\mathcal{H})$. The semigroup property and continuity guarantee the existence of the \emph{generator} $\mathcal{L}$ of the QDS~\cite[Theorem I.3.7]{engelnagelbook}, that is
\begin{equation}
\label{phit_L_rel}
\Phi_t = e^{t\mathcal{L}}\;, \quad t \in \mathbb{R}^+\;.
\end{equation}
Moreover, the generator $\mathcal{L}$ of a QDS may be always written in the \emph{GKLS form}~\cite{GKS_76,Lindblad_76}
\begin{equation}
\label{GKLS}
\mathcal{L}(X)=\mathrm{i}[H,X] + \sum_{k=1}^{d^2-1}  (L_k^\ast X L_k - \frac{1}{2} \{ L_k^\ast L_k , X \})\;,
\end{equation}
with $H=H^\ast$ and $d=\dim\mathcal{H}$.
Thus, it turns out that $\mathcal{L}$ is the sum of two contributions. The first one
\begin{equation}
\mathcal{L}_{\mathrm{H}}(X)=\mathrm{i}[H , X]\;, \quad X\in\mathcal{B}(\mathcal{H})
\end{equation}
is called the \emph{Hamiltonian part} of $\mathcal{L}$, since it is related to a self-adjoint operator $H$ that can be interpreted as an effective Hamiltonian of the system. The second one reads
\begin{equation}
\mathcal{L}_{\mathrm{D}}(X)=\sum_{k=1}^{d^2-1} (L_k^\ast X L_k - \frac{1}{2} \{ L_k^\ast L_k , X \})\;,
\end{equation}
and it is called the \emph{dissipative part} of $\mathcal{L}$. It involves the operators $L_k \in \mathcal{B}(\mathcal{H})$ called noise or \emph{jump operators} and, physically, it describes phenomena arising from the interaction of the system with the environment, like dissipation and decoherence. It vanishes for isolated systems, for which the generator $\mathcal{L}=\mathcal{L}_{\mathrm{H}}$ becomes Hamiltonian. Observe that the GKLS form of $\mathcal{L}$ and, in particular, the decomposition of $\mathcal{L}$ into Hamiltonian and dissipative parts is not unique~\cite{breuerpetruccione}. 

From a spectral perspective, the asymptotics of QDSs is governed by the unimodular eigenvalues of the dynamics, as underlined in the general discrete-time setting in Subsect.~\ref{spec-intro_subsec}. Specifically, given a QDS $(\Phi_t)_{t\in\mathbb{R}^+}$ with generator $\mathcal{L}$, the spectrum of $\mathcal{L}$ satisfies 
\begin{equation}
\spec (\mathcal{L}) \subseteq \{ \lambda \in \mathbb{C} \,\vert\, \Re(\lambda) \leqslant 0 \}\;,
\end{equation}
i.e. it is contained in the left half-plane. Moreover, it is closed under complex conjugation, as is the spectrum of a UCP map, and $0 \in \spec(\mathcal{L})$.
The asymptotic dynamics of the semigroup $(\Phi_t)_{t\in\mathbb{R}^+}$ takes place in the attractor subspace, which in this case is defined as
\begin{equation}
\label{attr-def_QDS}
\Attr (\Phi_t) \equiv \{ X \in \mathcal{B}(\mathcal{H}) \,\vert\, \mathcal{L}(X) = \lambda X \mbox{ for some } \lambda \in \spec_{\mathrm{P}}(\mathcal{L})  \} = \Attr (\Phi)\;.
\end{equation}
Here $\Phi \equiv \Phi_1 =\mathrm e^{\mathcal{L}}$ is the unit-time dynamics, and 
\begin{equation}
\spec_{ P} (\mathcal{L}) = \{  \lambda \in \spec (\mathcal{L}) \,\vert\, \Re(\lambda) = 0 \}
\end{equation}
is the peripheral spectrum of $\mathcal{L}$. Also, the fixed-point subspace of the QDS $(\Phi_t)_{t\in\mathbb{R}^+}$ is defined as
\begin{equation}
\label{fix_QDS}
\Fix(\Phi_t) \equiv \Ker (\mathcal{L}) = \{ X \in\mathcal{B}(\mathcal{H}) \,\vert\, \mathcal{L}(X) = 0 \}\;,
\end{equation}
and it generally differs from the fixed-point subspace $\Fix(\Phi)$ of $\Phi$. More precisely,
\begin{equation}
\Fix(\Phi_t) \subseteq \Fix(\Phi)\;,
\end{equation}
where the inclusion may be strict. As an example, consider a Hamiltonian GKLS generator with a couple of non-zero peripheral eigenvalues being multiples of $2\pi$.

In the Schr\"odinger picture, the dynamics is given by the (Hilbert-Schmidt) adjoint semigroup $(\Phi_t^\dagger)_{t\in\mathbb{R}^+}$ and, again, the study of the asymptotic evolution involves the analysis of the attractor and fixed-point subspaces of the semigroup defined as in \eqref{attr-def_QDS} and \eqref{fix_QDS}. In particular, the latter reads
\begin{equation}
\Fix(\Phi_t^\dagger) = \Ker(\mathcal{L}^\dagger)\;.
\end{equation}
Consistently with the discrete-time setting, we will keep calling stationary the states belonging to $\Fix(\Phi_t^\dagger)$. A notion of faithfulness may now be  introduced for QDSs.
\begin{definition}
	A QDS $(\Phi_t)_{t\in\mathbb{R}^+}$ is said to be faithful if there exists an invertible stationary state $\varrho$ for the adjoint semigroup $(\Phi_t^\dagger)_{t\in\mathbb{R}^+}$, viz. 
	\begin{equation}
		\mathcal{L}(\varrho)=0\;, \quad \varrho >0\;.
	\end{equation} 
\end{definition} 
The asymptotic dynamics is given by the asymptotic semigroup
\begin{equation}
\Phi_{t,\mathrm{P}} = \Phi_t \vert_{\Attr(\Phi_t)}\;.
\end{equation}
From the algebraic viewpoint, the asymptotic subspace is identified with the decoherence-free algebra of the semigroup, given by
\begin{eqnarray}
	\mathcal{N}=\{ X \in \mathcal{B}(\mathcal{H}) \,\vert\, \Phi_t (Y X)
	&=& \Phi_t(Y)\Phi_t(X)\;,\\ \Phi_t(XY) 
	&=& \Phi_t(X)\Phi_t(Y)\;, \,\forall Y \in \mathcal{B}(\mathcal{H})\;,\,\forall t \in \mathbb{R}^+ \}\;.
\end{eqnarray}
Similarly to the discrete-time setting~\eqref{dec-free-sem}, the asymptotic dynamics is described by the decoherence-free semigroup
\begin{equation}
\Phi_{\mathcal{N} , t} = \Phi_t \vert_{\mathcal{N}}\;, \quad t \in\mathbb{R}^+\;.
\end{equation}
The analogues of Proposition~\ref{ch_N} and Corollary~\ref{largest_N_cor} clearly hold in the Markovian continuous-time case.
\begin{remark}
$\Phi_t\vert_{\mathcal{N}}$ is a $\ast$-automorphism for all $t\in\mathbb{R}^+$, since $\Phi_t$ is invertible by \eqref{phit_L_rel}. As discussed in Remark~\ref{Phi_not_auto_rmk}, this is not generally true in the non-Markovian case.
\end{remark}

Once the spectral and algebraic approaches to the Markovian asymptotic dynamics have been introduced, we would like to discuss their relation. First, let us note that the concept of peripheral automorphism introduced in Definition~\ref{per_aut_def} can be naturally extended to the present setting by~\eqref{attr-def_QDS}. Now, the Markovian continuous-time analogues of Theorems~\ref{attr_ch_faith} and~\ref{mul_ch_pa_UCP} are still valid, as stated in the following propositions, see~\cite[Theorem 22]{Sasso_QMS_2023} and~\cite[Remark 10]{fagnola2019role} for the proofs. 
\begin{theorem}
\label{attr_ch_faith_markov}
Let $(\Phi_t)_{t\in\mathbb{R}^+}$ be a faithful QDS. Then
\begin{equation}
\label{eq_Attr_N}
\Attr(\Phi_t)=\mathcal{N}\;.
\end{equation}
\end{theorem}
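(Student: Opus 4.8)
The plan is to deduce the statement from the discrete-time Theorem~\ref{attr_ch_faith} by specialising to the unit-time map $\Phi\equiv\Phi_1=\mathrm e^{\mathcal L}$, and then to upgrade multiplicativity from integer times to all times $t\in\mathbb R^+$ by exploiting faithfulness. First I would note that $\Phi$ is a faithful UCP map: if $\varrho>0$ is the faithful stationary state of the adjoint semigroup, so that $\Phi_t^\dagger(\varrho)=\varrho$ for all $t\ge 0$, then in particular $\Phi^\dagger(\varrho)=\varrho>0$. Theorem~\ref{attr_ch_faith} then gives $\Attr(\Phi)=\mathcal N(\Phi)$, where $\mathcal N(\Phi)$ is the decoherence-free algebra of the single map $\Phi$ as defined in~\eqref{dec_def}; combining this with the identity $\Attr(\Phi_t)=\Attr(\Phi)$ recorded in~\eqref{attr-def_QDS} gives $\Attr(\Phi_t)=\mathcal N(\Phi)$. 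Since the continuous-time algebra $\mathcal N$ trivially satisfies $\mathcal N\subseteq\mathcal N(\Phi)$ (its defining requirement, multiplicativity of $\Phi_t$ for \emph{every} $t\in\mathbb R^+$, is stronger than multiplicativity of $\Phi^n$ for $n\in\mathbb N$), the inclusion $\mathcal N\subseteq\Attr(\Phi_t)$ is immediate, and only the reverse inclusion requires work.

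For $\Attr(\Phi_t)\subseteq\mathcal N$ I would argue as follows. Fix $X\in\Attr(\Phi_t)=\mathcal N(\Phi)$ and keep $\varrho>0$ as above. Consider $h(t)=\tr\!\big(\Phi_t(X)^\ast\Phi_t(X)\,\varrho\big)\ge 0$. Writing $\Phi_{s+t}=\Phi_s\circ\Phi_t$ and setting $Y=\Phi_t(X)$, the operator Schwarz inequality~\eqref{op_Schw_ineq} for $\Phi_s$ followed by the stationarity of $\varrho$ yields $h(s+t)=\tr\!\big(\Phi_s(Y)^\ast\Phi_s(Y)\,\varrho\big)\le\tr\!\big(\Phi_s(Y^\ast Y)\,\varrho\big)=\tr\!\big(Y^\ast Y\,\varrho\big)=h(t)$, so $h$ is non-increasing on $\mathbb R^+$. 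On the other hand, since $X\in\mathcal N(\Phi)$, Proposition~\ref{ch_N} applied to $\Phi$ gives $\Phi^n(X^\ast X)=\Phi^n(X)^\ast\Phi^n(X)$ for all $n\in\mathbb N$, hence $h(n)=\tr\!\big(\Phi_n(X^\ast X)\,\varrho\big)=\tr(X^\ast X\,\varrho)=h(0)$. A non-increasing function that is constant along the integers is constant, so $h\equiv h(0)$, i.e.\ $\tr\!\big((\Phi_t(X^\ast X)-\Phi_t(X)^\ast\Phi_t(X))\,\varrho\big)=h(0)-h(t)=0$ for every $t\ge 0$; as the operator in brackets is positive semidefinite by~\eqref{op_Schw_ineq} and $\varrho>0$, it must vanish. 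Thus $\Phi_t(X^\ast X)=\Phi_t(X)^\ast\Phi_t(X)$ for all $t$, and exchanging $X\leftrightarrow X^\ast$ (allowed since $\Attr(\Phi_t)$ is $\ast$-closed) also $\Phi_t(XX^\ast)=\Phi_t(X)\Phi_t(X)^\ast$; the continuous-time analogue of Proposition~\ref{ch_N} then gives $X\in\mathcal N$, completing the argument.

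The step I expect to be the main obstacle is precisely this passage from integer to arbitrary times: it cannot follow from the semigroup law alone, as the $2\pi$-periodicity ambiguity of the peripheral eigenvalues of $\mathcal L$ (the same phenomenon behind the fact that $\Fix(\Phi_t)$ is in general a proper subspace of $\Fix(\Phi)$) shows; faithfulness is therefore indispensable and enters through the monotonicity of $h$ together with the strict positivity of $\varrho$. An alternative to the reduction above would be to reprove $\mathcal N\subseteq\Attr(\Phi_t)$ directly in the continuous setting, applying Lemma~\ref{lemma_Hamana} to the idempotent UCP peripheral projection $\mathcal P_{\mathrm P}=\lim_i\Phi_{t_i}$ and to the invariant state $\varrho$, exactly as in the proof of Theorem~\ref{attr_ch_faith}; I would nonetheless keep the shorter route through the unit-time map $\Phi_1$.
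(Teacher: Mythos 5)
Your argument is correct, but it is not the route the paper takes: the paper does not prove Theorem~\ref{attr_ch_faith_markov} in the text at all, deferring instead to the literature (Theorem~22 of the cited work by Sasso et al., which establishes the identity between the reversible algebra and the decoherence-free algebra directly in the continuous-time, von Neumann algebraic setting). Your proof is a self-contained reduction to the paper's own finite-dimensional discrete-time result: faithfulness of the unit-time map $\Phi_1$ gives $\Attr(\Phi_t)=\Attr(\Phi_1)=\mathcal N(\Phi_1)$ via Theorem~\ref{attr_ch_faith} and~\eqref{attr-def_QDS}, the inclusion $\mathcal N\subseteq\mathcal N(\Phi_1)$ is definitional, and the only genuinely new ingredient is your interpolation step: the Schwarz inequality~\eqref{op_Schw_ineq} plus stationarity of the faithful state make $h(t)=\tr\big(\Phi_t(X)^\ast\Phi_t(X)\varrho\big)$ non-increasing, constancy at integer times forces constancy for all $t$, and strict positivity of $\varrho$ then upgrades the trace identity to the operator identity $\Phi_t(X^\ast X)=\Phi_t(X)^\ast\Phi_t(X)$, so the continuous-time analogue of Proposition~\ref{ch_N} applies. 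All steps check out (the $\ast$-closedness you invoke follows either from Hermiticity preservation of $\mathcal L$ or simply from $X^\ast\in\mathcal N(\Phi_1)$, since $\mathcal N(\Phi_1)$ is a $C^\ast$-algebra). What your approach buys is an elementary, purely finite-dimensional proof that reuses Theorem~\ref{attr_ch_faith} and makes explicit where faithfulness enters in passing from integer to arbitrary times; what the cited proof buys is validity in the general (infinite-dimensional, normal UCP) setting used again in Sect.~\ref{sec:infinite_dimensional}, where your reduction through the peripheral projection and finite-dimensional spectral theory would not carry over verbatim.
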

\begin{remark}
It is possible to see that equation~\eqref{eq_Attr_N} is not a sufficient condition for faithfulness, see~\cite[Example 2]{AFK_asympt3}.
\end{remark}
\begin{theorem}
\label{mul_ch_pa_UCP_mark}
Let $(\Phi_t)_{t\in\mathbb{R}^+}$ be a QDS. Then, the following conditions are equivalent:
\begin{enumerate}[a)]
	\item $(\Phi_t)_{t\in\mathbb{R}^+}$ peripherally automorphic;
	\item $\Attr(\Phi_t) \subseteq\mathcal{N}$.
\end{enumerate}
\end{theorem}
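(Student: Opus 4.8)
The plan is to reduce Theorem~\ref{mul_ch_pa_UCP_mark} to its already-established discrete-time counterpart, Theorem~\ref{mul_ch_pa_UCP}, applied not just to $\Phi_1$ but to every unit-time map $\Phi_s=\mathrm e^{s\mathcal L}$, $s>0$. The preliminary step I would carry out first is to observe that, for each $s>0$, the discrete-time attractor subspace of the UCP map $\Phi_s$ coincides with the QDS attractor $\Attr(\Phi_t)$ of~\eqref{attr-def_QDS}: one simply notes that $\Phi_s$ is the unit-time element of the QDS $(\Phi_{su})_{u\in\mathbb R^+}$, whose generator $s\mathcal L$ has the same peripheral eigenvectors as $\mathcal L$, so~\eqref{attr-def_QDS} applies verbatim after rescaling time. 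With this in hand — and recalling the natural extension of Definition~\ref{per_aut_def} to the present setting — peripheral automorphy of $(\Phi_t)_{t\in\mathbb R^+}$ becomes equivalent to peripheral automorphy, in the discrete sense, of every (equivalently, one) $\Phi_s$.

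For the implication \emph{a)} $\Rightarrow$ \emph{b)} I would then argue: assuming $(\Phi_t)_{t\in\mathbb R^+}$ peripherally automorphic, each $\Phi_s$ is peripherally automorphic, so Theorem~\ref{mul_ch_pa_UCP} gives $\Attr(\Phi_t)=\Attr(\Phi_s)\subseteq\mathcal N(\Phi_s)$, the decoherence-free algebra of the single map $\Phi_s$. Feeding $n=1$ into the characterization of $\mathcal N(\Phi_s)$ provided by Proposition~\ref{ch_N} applied to $\Phi_s$ shows that every $X\in\Attr(\Phi_t)$ satisfies $\Phi_s(X^\ast X)=\Phi_s(X)^\ast\Phi_s(X)$ and $\Phi_s(XX^\ast)=\Phi_s(X)\Phi_s(X)^\ast$; since $s>0$ is arbitrary, the continuous-time analogue of Proposition~\ref{ch_N} yields $X\in\mathcal N$, hence $\Attr(\Phi_t)\subseteq\mathcal N$.

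For \emph{b)} $\Rightarrow$ \emph{a)} I would transcribe the discrete argument directly: let $\mathcal P_{\mathrm P}$ be the peripheral projection of $\Phi_1$, so that by~\eqref{P_p_for} $\mathcal P_{\mathrm P}=\lim_i\Phi_{n_i}$ along some increasing sequence $(n_i)\subseteq\mathbb N$ and $\Ran\mathcal P_{\mathrm P}=\Attr(\Phi_t)$. Given $X,Y\in\Attr(\Phi_t)\subseteq\mathcal N$, multiplicativity of each $\Phi_{n_i}$ (legitimate since $n_i\in\mathbb R^+$ and $X\in\mathcal N$) gives $\Phi_{n_i}(XY)=\Phi_{n_i}(X)\Phi_{n_i}(Y)$; letting $i\to\infty$, the right-hand side converges to $\mathcal P_{\mathrm P}(X)\mathcal P_{\mathrm P}(Y)=XY$ and the left-hand side to $\mathcal P_{\mathrm P}(XY)$, so $XY=\mathcal P_{\mathrm P}(XY)\in\Attr(\Phi_t)$; thus $\Attr(\Phi_t)$ is closed under products, i.e.\ $(\Phi_t)_{t\in\mathbb R^+}$ is peripherally automorphic. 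I expect the only genuinely delicate point to be the preliminary identification of the discrete attractor of $\Phi_s$ with $\Attr(\Phi_t)$ for all $s$, which hinges on the peripheral part of $\mathcal L$ being semisimple (equivalently, on the power-boundedness of $\Phi_1$), so that exponentiation introduces no Jordan blocks among unimodular eigenvalues; everything else is a routine translation of the discrete-time proofs, using $\mathbb N\subseteq\mathbb R^+$ so that the peripheral projection of $\Phi_1$ is available inside the semigroup.
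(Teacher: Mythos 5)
Your proof is correct, but note that for Theorem~\ref{mul_ch_pa_UCP_mark} the paper gives no argument of its own: it simply defers to the literature (\cite[Theorem 22]{Sasso_QMS_2023}, \cite[Remark 10]{fagnola2019role}). What you supply is a self-contained reduction to the discrete-time statement, Theorem~\ref{mul_ch_pa_UCP}: you identify the discrete attractor of every $\Phi_s=\mathrm e^{s\mathcal L}$ with $\Attr(\Phi_t)$ by applying \eqref{attr-def_QDS} to the rescaled semigroup, so that peripheral automorphy of the QDS is literally the same condition as peripheral automorphy of each single map $\Phi_s$; then \emph{a)} $\Rightarrow$ \emph{b)} follows from the discrete theorem together with the continuous-time analogue of Proposition~\ref{ch_N} (letting $s$ range over all positive times), and \emph{b)} $\Rightarrow$ \emph{a)} is the peripheral-projection argument via \eqref{P_p_for} transcribed from the discrete proof of Theorem~\ref{attr_ch_faith}. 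Both implications are sound; in fact \emph{b)} $\Rightarrow$ \emph{a)} could be had even more cheaply by observing that the continuous-time algebra $\mathcal N$ is contained in the discrete decoherence-free algebra of $\Phi_1$ (the defining conditions at all $t\in\mathbb R^+$ include the integer times), so \emph{b)} gives $\Attr(\Phi_1)\subseteq\mathcal N(\Phi_1)$ and Theorem~\ref{mul_ch_pa_UCP} applies directly to $\Phi_1$. The one point you flag as delicate, semisimplicity of the peripheral part of $\mathcal L$, is not an extra hypothesis: each $\Phi_t$ is UCP, hence an operator-norm contraction, so $\Phi_1$ is power bounded and its unimodular eigenvalues are automatically semisimple; moreover, since the generalized eigenspaces of $\mathcal L$ are invariant under $\mathrm e^{s\mathcal L}$, the unimodular eigenvectors of $\Phi_s$ are in any case sums of eigenvectors of $\mathcal L$ with purely imaginary eigenvalues, so the identification $\Attr(\Phi_s)=\Attr(\Phi_t)$ for all $s>0$ holds exactly as the paper already asserts for $s=1$ in \eqref{attr-def_QDS}. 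In short, yours is a correct and more elementary route than the citations the paper relies on, at the modest price of invoking the continuous-time version of Proposition~\ref{ch_N}, which the paper asserts without proof.
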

Again, Theorem~\ref{attr_ch_faith_markov} implies the equivalence of the two approaches to the asymptotics for faithful evolutions.
We conclude the section with an important property for the decoherence-free algebra of Markovian evolutions~\cite[Remark 1]{fagnola2017OSID}.
\begin{proposition}
Let $(\Phi_t)_{t\in\mathbb{R}^+}$ be a QDS with generator $\mathcal{L}$ having GKLS form \eqref{GKLS}. Then
\begin{equation}
\mathcal{N} \subseteq \{ X \in \mathcal{B}(\mathcal{H}) \,\vert\, \Phi_t(X) = \mathrm e^{\mathrm{i}tH}X\mathrm e^{-\mathrm{i}tH}\;,\, \forall t\in \mathbb{R}^+ \}\; .
\end{equation}
\end{proposition}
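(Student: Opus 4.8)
The plan is to show that every element of $\mathcal{N}$ evolves purely unitarily under the Hamiltonian part, i.e.\ $\Phi_t(X) = \mathrm e^{\mathrm i t H} X \mathrm e^{-\mathrm i t H}$, by exploiting the characterization of $\mathcal{N}$ from Proposition~\ref{ch_N} together with the explicit GKLS form~\eqref{GKLS} of the generator. The starting observation is that for $X \in \mathcal{N}$ the multiplicativity conditions hold for all $t$; differentiating at $t=0$ converts them into conditions on the generator $\mathcal{L}$. Concretely, from $\Phi_t(X^\ast X) = \Phi_t(X)^\ast \Phi_t(X)$ and $\Phi_t(XX^\ast) = \Phi_t(X)\Phi_t(X)^\ast$, taking $\frac{\mathrm d}{\mathrm d t}\big|_{t=0}$ yields the two identities $\mathcal{L}(X^\ast X) = \mathcal{L}(X)^\ast X + X^\ast \mathcal{L}(X)$ and $\mathcal{L}(XX^\ast) = \mathcal{L}(X)X^\ast + X \mathcal{L}(X)^\ast$.

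The key computation is then to insert the GKLS form into, say, the first identity. Since the Hamiltonian part $\mathcal{L}_{\mathrm H}(Y) = \mathrm i[H,Y]$ is a derivation, it automatically satisfies $\mathcal{L}_{\mathrm H}(X^\ast X) = \mathcal{L}_{\mathrm H}(X)^\ast X + X^\ast \mathcal{L}_{\mathrm H}(X)$, so it cancels out and one is left with a constraint purely on the dissipative part: $\mathcal{L}_{\mathrm D}(X^\ast X) = \mathcal{L}_{\mathrm D}(X)^\ast X + X^\ast \mathcal{L}_{\mathrm D}(X)$. Expanding $\mathcal{L}_{\mathrm D}$ and simplifying, the terms involving the anticommutators $\{L_k^\ast L_k, \cdot\}$ combine with the cross terms so that the whole expression collapses to $\sum_k (L_k X - X L_k)^\ast (L_k X - X L_k) = 0$, that is $\sum_k [L_k, X]^\ast [L_k, X] = 0$. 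Since each summand is positive semidefinite, this forces $[L_k, X] = 0$ for every $k$. Running the same argument with the $XX^\ast$ identity (or equivalently applying the first conclusion to $X^\ast$, which also lies in $\mathcal{N}$) gives $[L_k, X^\ast] = 0$, hence $[L_k^\ast, X] = 0$ for all $k$ as well.

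Once $X$ commutes with every $L_k$ and every $L_k^\ast$, the dissipative part acts trivially on $X$: indeed $\mathcal{L}_{\mathrm D}(X) = \sum_k (L_k^\ast X L_k - \tfrac12 L_k^\ast L_k X - \tfrac12 X L_k^\ast L_k) = \sum_k (L_k^\ast L_k X - L_k^\ast L_k X) = 0$, using $X L_k = L_k X$ and $X L_k^\ast = L_k^\ast X$. Therefore $\mathcal{L}(X) = \mathrm i[H,X]$. The same reasoning applies to $\Phi_s(X)$ for any $s$, since $\mathcal{N}$ is invariant under the semigroup, so $\mathcal{L}$ restricted to the orbit of $X$ coincides with $\mathcal{L}_{\mathrm H}$; solving the resulting differential equation $\frac{\mathrm d}{\mathrm d t}\Phi_t(X) = \mathrm i[H, \Phi_t(X)]$ gives $\Phi_t(X) = \mathrm e^{\mathrm i t H} X \mathrm e^{-\mathrm i t H}$. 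I expect the main obstacle to be purely bookkeeping: carrying out the expansion of the dissipative identity carefully enough to see the perfect-square structure $\sum_k [L_k,X]^\ast[L_k,X]$ emerge, and making sure the positivity argument is applied correctly (one needs that a sum of positive operators vanishing implies each vanishes, and that $Y^\ast Y = 0 \Rightarrow Y = 0$). A minor subtlety worth a sentence is justifying the termwise differentiation at $t=0$, which is immediate in finite dimensions since $\Phi_t = \mathrm e^{t\mathcal L}$ is norm-analytic in $t$.
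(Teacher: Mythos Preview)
The paper does not actually prove this proposition; it merely states it with a citation to~\cite[Remark 1]{fagnola2017OSID}. Your argument is correct and is precisely the standard one used in that reference: differentiate the multiplicativity identity $\Phi_t(X^\ast X)=\Phi_t(X)^\ast\Phi_t(X)$ at $t=0$, cancel the Hamiltonian (derivation) part, and observe that the dissipative remainder $\mathcal L_{\mathrm D}(X^\ast X)-\mathcal L_{\mathrm D}(X)^\ast X - X^\ast\mathcal L_{\mathrm D}(X)=\sum_k[L_k,X]^\ast[L_k,X]$ is a sum of positive operators that must vanish, forcing $[L_k,X]=0$ (and hence $[L_k^\ast,X]=0$ via $X^\ast\in\mathcal N$). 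The invariance $\Phi_s(\mathcal N)\subseteq\mathcal N$ then propagates $\mathcal L|_{\mathcal N}=\mathcal L_{\mathrm H}|_{\mathcal N}$ along the whole orbit, yielding $\Phi_t(X)=\mathrm e^{\mathrm i tH}X\mathrm e^{-\mathrm i tH}$. There is nothing further to compare.
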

\begin{remark}
The proposition holds for any $H$ in a GKLS representation of $\mathcal{L}$. 
Moreover, equality does not hold in general, see~\cite[Example 8]{fagnola2017OSID}.
\end{remark}
The latter result ensures that for a Markovian continuous-time evolution, the decoherence-free algebra is a subspace in which the evolution of the system is unitary, i.e. asymptotically the quantum system behaves as if it is closed, in line with our physical intuition. In particular, this justifies the idea behind the algebraic approach, viz. finding the subspace in which the dynamics acts as a \emph{group} of $\ast$-automorphisms. 

However, if we relax the Markovianity assumption, the dynamics on the decoherence-free algebra is no longer unitary in general, even in the faithful case. Indeed, general structure theorems of the asymptotics, see~\cite[Theorem 3]{AFK_asympt4},~\cite[Theorem 3.2]{AFK_asympt_1}, and~\cite[Theorem 8]{wolf2010inverse}, reveal the occurrence of an additional (non-unitary) asymptotic evolution described by permutations. The latter disappears for Markovian continuous evolutions, see~\cite[Theorem 5.3]{AFK_asympt_1}. 
\section{Remarks on the infinite-dimensional case}
\label{sec:infinite_dimensional}
In this section we will briefly outline several results regarding open-system asymptotics in the infinite-dimensional Markovian case. Specifically, we will call a quantum dynamical semigroup a family $(\Phi_t)_{t\in\mathbb{R}^+}$ of norm-continuous one-parameter semigroup of UCP normal maps on $\mathcal{B}(\mathcal{H})$, the von Neumann algebra of bounded operators on a (generally infinite-dimensional, but separable) Hilbert space $\mathcal{H}$. Such a semigroup is generated by a GKLS operator in the form~\cite{Lindblad_76}
\begin{equation}
	\mathcal L(X)=\mathrm i[H,X] + \sum_{k\in\mathbb N} (L^*_kXL_k-\frac{1}{2}\{L_k^*L_k,X\})\;, 
\end{equation}
with $X$, $H=H^*$, $(L_k)_{k\in\mathbb N} \in \mathcal{B}(\mathcal{H})$ , such that $\sum_{k\in \mathbb N} L_k^*L_k$ is strongly convergent.

Again, we can follow two strategies for the analysis of the asymptotics: the spectral one seeking a JdLG decomposition of the form \eqref{JDK_dec} and the algebraic one looking at the decoherence-free algebra \eqref{dec_def} of the dynamics. Regarding the first approach, the infinite-dimensional analogue of the attractor subspace of $\Phi_t$ is called the \emph{reversible algebra}, and it reads
\begin{equation}
\mathfrak{M}_{\mathrm{r}} =  \overline{\mathrm{span}} \{ X \in \mathcal{B}(\mathcal{H}) \,\vert\, \mathcal{L}(X)=\lambda X \mbox{ for some } \lambda,	 \mbox{ with } \abs{\lambda}=1\} \;,
\end{equation}
where the closure is taken with respect to the weak-$\ast$ topology. Importantly, $\mathfrak{M}_{\mathrm{r}} = \mathcal{N}$ in the faithful case, see \cite[Theorem 22]{Sasso_QMS_2023}. In this situation, $\mathfrak{M}_{\mathrm{r}}$ and, consequently, $\mathcal N$ are atomic von Neumann algebras (i.e.\ every nonzero projection has a minimal nonzero subprojection) and there exists a JdLG decomposition of $\mathcal{B}(\mathcal{H})$ 
\begin{equation}
\mathcal{B}(\mathcal{H}) = \mathfrak{M}_{\mathrm{r}} \oplus \mathfrak{M}_{s}\;,
\end{equation}
where
\begin{equation}
\mathfrak{M}_{\mathrm{s}} = \{ X \in \mathcal{B}(\mathcal{H}) \,\vert\, 0 \in \overline{\{ \Phi_t(X) \}}_{t\geqslant 0} \}\;,
\end{equation}
is called the \emph{stable subspace} of the semigroup. Therefore, also for infinite-dimensional open systems, the spectral and algebraic approaches turn out to be equivalent for faithful Markovian evolutions. 

For non-faithful Markovian evolutions interesting issues arise in the infinite-dimen\-sional case: the decoherence-free algebra $\mathcal{N}$ can be non-atomic, and also of type different from $I$. In the following example we will construct a decoherence-free algebra which is a factor of type $III$, generalizing the construction discussed in~\cite[Subsect. II.A]{Sasso_QMS_2023}. These algebras are the most singular ones in the original classification by von Neumann~\cite{onringoperator}. In particular, all the projections on such factor are similar, and their extremely singular nature made them less attractive in von Neumann's seminal papers~\cite{onringoperatorii,onringoperatoriii}. Nevertheless, they turned out to be extremely important in the algebraic description of quantum field theory~\cite{Araki1964,Haag1996,Kadison1963}. Interestingly, they can also be obtained in the alternative description of quantum mechanics based on the Schwinger picture~\cite{CIAGLIA2023104901}, where the  building blocks of the theory are the transitions, rather than the states or the observables~\cite{ciaglia_categorical,groupoidI,groupoidII,Schwinger}.

\begin{example}[Type $III$ decoherence-free algebra]
	In this example, we will construct a quantum dynamical semigroup whose decoherence-free algebra is a factor of type $III$. In the standard approach, one considers the action of a countable group on an Abelian von Neumann algebra of operators, and then takes the von Neumann algebra generated by operators representing such action on a proper Hilbert space. To our purpose, we will simply identify a countable subfamily of the generators with the jump operators of the GKLS generator. Specifically, we will employ the construction discussed in Appendix~\ref{app:ppfactors} and hereafter summarized. 
	
	Let $\Omega_\infty$ be the set of binary sequences, and  consider the group $\mathcal{G}= (\Omega , \oplus )$, where $\Omega \subseteq \Omega_\infty $ is the set of finite sequences [which are eventually zero, see~\eqref{Omega_def}], and the composition law $\oplus$ is the entrywise sum modulo 2. Moreover, consider the probability measure $\mu_\lambda$ on the binary sequences which on the $n$th component $x_n$ assigns a probability $\lambda$ if $x_n=0$, and $1-\lambda$ if $x_n=1$, with $\lambda$ a fixed parameter in $(0,1/2]$. Also, let $\mathcal H$ be the Hilbert space of complex-valued $\mu_\lambda$-measurable square-integrable functions  on $\Omega_\infty\times \Omega$, defined in~\eqref{eq:hilbert_space_pukanski}. 
	
	Let $e_k=(\delta_{jk})_{j\in\mathbb N}\in\Omega$ and consider the jump operators
	\begin{equation}
		M_k=V_{e_k}\;,
	\end{equation}
	with $V_{e_k}$ being the unitary representative on $\mathcal{H}$ of the group composition with $e_k$, see~\eqref{eq:unitary_representation}. We also need to consider the following family of jump operators on $\mathcal{H}$ in the form
	\begin{equation}
		N_k = L_{\psi_k}\;,
	\end{equation}
	with $L_\phi$ being the multiplication operator by $\phi$, defined as in~\eqref{eq:multiplication_repr}, and $\psi_k$ of the form
	\begin{equation}
		\psi_k(x) =
		\begin{cases}
			\ +1 \mbox{ for } x_k = 0, \\
			\ -1 \mbox{ for } x_k = 1.
		\end{cases}\;
	\end{equation}
	We can now construct the GKLS generator as
	\begin{equation}
	\label{L_III_def}
		\mathcal L_\lambda(X) = \sum_{k\in \mathbb N} m_k (M_k^* X M_k - X)+\sum_{k\in \mathbb N} n_k (N_k^* X N_k - X)\;,
	\end{equation}
	with $(m_k)_{k\in\mathbb{N}}$ and $(n_k)_{k\in\mathbb{N}}$ two strictly positive sequences with $\sum_k m_k<+\infty$, $\sum_k n_k<+\infty$.
	We claim that the decoherence-free algebra of the QDS $e^{t\mathcal{L}_\lambda}$ is non-atomic. Indeed, we have the following result.
\begin{proposition}
\label{prop: typeIII}
The decoherence-free algebra 
of the quantum dynamical semigroup generated by 
$\mathcal L_\lambda$ in~\eqref{L_III_def}
is a type $III$ factor for all $\lambda\in(0,1/2)$, and a  type $II_1$ factor for $\lambda=1/2$.
\end{proposition}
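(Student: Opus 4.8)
The plan is to identify the decoherence-free algebra $\mathcal{N}$ of the semigroup $e^{t\mathcal{L}_\lambda}$ explicitly as the von Neumann algebra generated by the unitaries $\{V_{e_k}\}_{k\in\mathbb{N}}$ acting on $\mathcal{H}$, and then invoke the classical Pukánszki/Krieger-type computation (carried out in Appendix~\ref{app:ppfactors}) that this is a factor whose type is governed by the parameter $\lambda$. Concretely, by Proposition~\ref{ch_N} we have $X\in\mathcal{N}$ iff $\Phi_t(X^*X)=\Phi_t(X)^*\Phi_t(X)$ and $\Phi_t(XX^*)=\Phi_t(X)\Phi_t(X)^*$ for all $t$. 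Differentiating at $t=0$ and using the GKLS form~\eqref{L_III_def}, this is equivalent (by the standard argument, e.g.\ \cite{fagnola2017OSID,fagnola2019role}) to the two-sided commutation conditions $[X,M_k]=[X,M_k^*]=0$ and $[X,N_k]=[X,N_k^*]=0$ for all $k$, i.e.\ $\mathcal{N} = \{M_k,N_k : k\in\mathbb{N}\}'$, the commutant of the jump operators (this uses crucially that all jump operators here are \emph{normal} — the $M_k=V_{e_k}$ are unitary and the $N_k=L_{\psi_k}$ are self-adjoint — so the generalized commutation theorem for the decoherence-free algebra applies cleanly).

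The next step is to compute this commutant. The operators $N_k=L_{\psi_k}$ are multiplication operators generating (the representation of) the diagonal/Abelian algebra $L^\infty(\Omega_\infty,\mu_\lambda)$ built from the coordinate functions $\psi_k$; the operators $M_k=V_{e_k}$ implement the $\mathcal{G}$-action of the finite-sequence group $\Omega$ by translation. The commutant $\{M_k,N_k\}'$ is therefore, by the standard group-measure-space duality on the Hilbert space $\mathcal{H}=L^2(\Omega_\infty\times\Omega,\mu_\lambda)$ of~\eqref{eq:hilbert_space_pukanski}, the \emph{opposite} crossed-product–type algebra, namely the one generated by multiplication operators together with the unitaries implementing the \emph{other} copy of $\mathcal{G}$ — this is exactly the Pukánszki construction recalled in the appendix, where two commuting copies of the group-measure-space von Neumann algebra act, each being the commutant of the other. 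Thus $\mathcal{N}$ is $*$-isomorphic to the von Neumann algebra $\mathcal{R}_\lambda$ generated by the $\mathcal{G}$-action on $(\Omega_\infty,\mu_\lambda)$.

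Finally, one analyzes $\mathcal{R}_\lambda$ as a factor of Krieger type. The action of $\mathcal{G}=(\Omega,\oplus)$ on $(\Omega_\infty,\mu_\lambda)$ by entrywise addition mod $2$ is ergodic and measure-class-preserving but not measure-preserving unless the measure is symmetric; the Radon–Nikodym cocycle $\dfrac{d(\mu_\lambda\circ g)}{d\mu_\lambda}$ takes values in the group generated by $\lambda/(1-\lambda)$. For $\lambda=1/2$ the measure $\mu_{1/2}$ is the uniform product measure, the action is measure-preserving and ergodic with an invariant trace, giving the hyperfinite $II_1$ factor; for $\lambda\in(0,1/2)$ the ratio $\lambda/(1-\lambda)\in(0,1)$ is irrational-logarithmically generic and the associated flow of weights / ratio set is $[0,\infty)$ (or $\{0\}\cup\{(\lambda/(1-\lambda))^n : n\in\mathbb{Z}\}$ depending on $\lambda$), forcing type $III$. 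Either way one concludes factoriality from ergodicity of the $\mathcal{G}$-action and the type from the asymptotic ratio set of $\mu_\lambda$. The main obstacle — and the part deferred to Appendix~\ref{app:ppfactors} — is the careful verification that the pair $(\mathcal{G}\curvearrowright(\Omega_\infty,\mu_\lambda))$ is free and ergodic and the precise computation of its ratio set; granting that, the identification $\mathcal{N}=\{M_k,N_k\}'\cong\mathcal{R}_\lambda$ together with the commutation-theorem characterization of $\mathcal{N}$ closes the argument.
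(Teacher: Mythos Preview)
Your opening sentence contains a slip: $\mathcal{N}$ is certainly \emph{not} the von Neumann algebra generated by $\{V_{e_k}\}$ --- those unitaries are jump operators and lie in $\mathcal{N}'$, not in $\mathcal{N}$. The body of your argument is consistent with the correct identification $\mathcal{N}=\{M_k,N_k:k\in\mathbb{N}\}'$, which is exactly the paper's first step (the paper simply cites \cite[Proposition~2]{fagnola2019role} rather than re-deriving it via differentiation of the Schwarz inequality).

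From that point on the paper takes a shorter and somewhat different route. Instead of identifying $\mathcal{N}$ itself as the ``opposite'' crossed product and then running a Krieger ratio-set analysis, the paper invokes the fact that a von Neumann algebra and its commutant have the same type \cite[Corollary~2.24]{takesaki2013theory}, so it suffices to determine the type of $\mathcal{N}'=\{M_k,N_k\}''$. It then verifies concretely that this double commutant is the full Puk\'anszky--Powers algebra $\mathcal{M}_{\mathrm P}$: every $x^o\in\Omega$ is a finite $\oplus$-sum of the $e_k$, so all $V_{x^o}$ lie in the algebra generated by the $M_k$; and the characteristic functions of cylinders are polynomials in the $\psi_k$, whose span is weak-$*$ dense in $L^\infty(\Omega_\infty,\mu_\lambda)$, so all $L_\phi$ lie in the algebra generated by the $N_k$. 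The type of $\mathcal{M}_{\mathrm P}$ is then quoted directly from \cite{pukanski_some}. Your approach --- passing to the commutant via group-measure-space duality and then re-establishing the type from the Radon--Nikodym cocycle --- is correct in outline, but the final Krieger/ratio-set paragraph is redundant given that Appendix~\ref{app:ppfactors} already records the type classification; the paper's argument avoids both the duality step and the direct type computation by working on the $\mathcal{N}'$ side throughout.
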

\begin{proof}
The decoherence-free algebra 
is given by
	\begin{equation}
		\mathcal N=\{M_k,N_k\, \vert \, k\in\mathbb N\}'\;,
	\end{equation}
	as a consequence of \cite[Proposition 2]{fagnola2019role}.
	Here, note that $M_k$ and $N_k$ are both self-adjoint and unitary operators on $\mathcal{H}$.
		
Since the type of an algebra is the same as the type of its commutant~\cite[Corollary 2.24]{takesaki2013theory}, we only need to show that the von Neumann algebra generated by the jump operators is exactly the Puk{\'a}nszky-Powers factor. Call the latter algebra $\mathcal M$, so that $\mathcal M=\mathcal N'$. First, observe that every $x^o\in\Omega$ is a finite sum of some $e_k$, so every $V_{x^o}\in \mathcal M$, as $V$ is a unitary representation of the group $\mathcal G$.
	
	Furthermore, we can prove that all $L_\phi$ defined by~\eqref{eq:multiplication_repr} belong to $\mathcal M$. First, the cylinders defining the $\sigma$-algebra over $\Omega_\infty$ are in the form
	\begin{equation}
		\label{eq:cylinder}
		\mathscr C(y_1,\dots,y_n) = \{x\in\Omega_\infty\,|\, x_1=y_1,\dots,x_n=y_n\} = \bigcap_{k=1}^n\{x\in\Omega_\infty\,|\, x_k=y_k\} \equiv \bigcap_{k=1}^n \mathscr{C}_k\;,
	\end{equation}
	where $(y_1,\dots,y_n)\in\{0,1\}^n$. The characteristic function over $\mathscr{C}_k$ in~\eqref{eq:cylinder} is 
	\begin{equation}
	\mathbf{1}_{\mathscr{C}_k}(x)=(1-y_k) \psi_k(x) - y_k\psi_k(x)\;.
	\end{equation}
	Therefore, the characteristic function $\mathbf{1}_{\mathscr{C}}$ over the cylinder $\mathscr C \equiv \mathscr C(y_1,\dots,y_n)$ reads
	\begin{equation}
	\mathbf{1}_{\mathscr C} = \prod_{k=1}^n \mathbf{1}_{\mathscr{C}_k}\;.
	\end{equation} 
	
	As a result, the multiplication operators  $L_{\mathbf{1}_{\mathscr{C}_k}}$ and $L_{\mathbf{1}_{\mathscr C}}$, 
	belong to $\mathcal M$. As the span of the characteristic functions $\mathbf{1}_{\mathscr{C}}$ of all cylinders is dense in $\mathrm L^\infty(\Omega_\infty,\mu_\lambda)$ in the $*$-weak topology, 
	we can conclude that $\mathcal M$ is indeed the Puk{\'a}nszky-Powers algebra. 
	
	In particular, $\mathcal{M}$, as well as the decoherence-free algebra $\mathcal N$, is a type $III$ factor whenever $0<\lambda<1/2$, and a  type $II_1$ factor for $\lambda=1/2$~\cite{pukanski_some}. \qed	
\end{proof}

\end{example}

\section{Conclusions}
In the present work, we provided an updated short overview of the results regarding open-system Heisenberg asymptotics in the finite-dimensional case. We highlighted the existence of two kinds of strategies to address this topic: a spectral approach related to the peripheral (unimodular) eigenvalues of the dynamics, and an algebraic one where the asymptotic evolution is characterized by a $\ast$-homomorphism. 

After introducing the basic notions of the two strategies, we presented the main connections between the two approaches in literature. Specifically, we examined the relation between the attractor subspace and the decoherence-free algebra, which are the asymptotic subspaces corresponding to the spectral and algebraic approaches, respectively. We illustrated how faithfulness implies the equality of such spaces and, consequently, the equivalence of the two methods, unveiling the strength of this assumption. Additionally, the implications of the milder property of peripheral automorphism were discussed. 

Finally, we commented on the infinite dimensional case, where new issues arise for both the algebraic and topological viewpoints. In particular, we were able to construct a quantum dynamical semigroup whose decoherence-free algebra is non-atomic and is given by a type $III$ factor.

This short review aims to bridge the gap between the mathematical and physical communities, thereby enhancing the understanding of the asymptotic problem, both conceptually and technically.

\section{Acknowledgements}
We acknowledge support from INFN through the project ``QUANTUM'', from the Italian National Group of Mathematical Physics (GNFM-INdAM), and from the Italian funding within the ``Budget MUR - Dipartimenti di Eccellenza 2023--2027''  - Quantum Sensing and Modelling for One-Health (QuaSiModO).\ PF acknowledges support from Italian PNRR MUR project CN00000013 -``Italian National Centre on HPC, Big Data and Quantum Computing''.\ DA acknowledges support from PNRR MUR project PE0000023-NQSTI.\ AK acknowledges the support from the ``Young Scientist" project at the Center for Theoretical Physics, Polish Academy of Sciences.
\appendix
	\section{Puk\'anszky-Powers factors}
	\label{app:ppfactors}
	In this appendix, we summarize some aspects regarding the construction of type $II$ and $III$ factors. This approach was originally discussed by von Neumann~\cite{onringoperatoriv}, and subsequently by Puk\'anszky and Powers~\cite{Powers1967,pukanski_some}, who used it to identify inequivalent classes of type $III$ algebras. The construction is based on the ergodic and free actions of countable groups on Abelian algebras~\cite{takesaki2013theory}, but it can be also framed in the context of infinite products of Hilbert spaces~\cite{von1939infinite}. By defining a proper convolution between elements, one defines a crossed product algebra, and the type of the algebra is completely characterized by the structure of the underlying Abelian algebra~\cite[Theorem~V.7.12]{takesaki2013theory}.

	We will not give the general theorem, but we will limit to present the construction from Puk{\'a}nszky~\cite{pukanski_some}, see also~\cite{CIAGLIA2023104901} for technical details.
	In particular, we consider the set of binary sequences 
	\begin{equation}
		\Omega_\infty=\bigl\{(x_n)_{n\in\mathbb N}\,|\,x_n\in\{0,1\}\bigr\}\;,
	\end{equation}
	equipped with the product $\sigma$-algebra $\Sigma$ generated by the cylinders defined in \eqref{eq:cylinder}. Then the measurable space $(\Omega_\infty , \Sigma )$ may be equipped with the following probability measure. On the $n$th component $x_n$, a probability $\lambda$ is assigned if $x_n=0$, and $1-\lambda$ if $x_n=1$, with $\lambda$ a fixed parameter in $(0,1/2]$. Consequently, the probability measure $\mu_\lambda$ on $\Omega_\infty$ is constructed through Kolmogorov theorem, see~\cite[Theorem II.3.3]{Shiryaev1996}.
	The Abelian algebra underlying the factor is thus $\mathrm L^\infty(\Omega_\infty,\mu_\lambda)$. Afterwards, a countable group is introduced as the set $\Omega$ of finite sequences of the form
	\begin{equation}
	\label{Omega_def}
		\Omega = \{(x_k)\in\Omega_\infty\,|\, x_k=0\;,\, k>N\;,\textnormal{ for some }N\in\mathbb{N} \}\;.
	\end{equation}
	Using von Neumann's notation~\cite{von1939infinite}, we will call $x,y,z$ the elements of $\Omega_\infty$, and with $x^o,y^o,z^o$ those of $\Omega$. The composition rule on $\Omega_\infty$ and, consequently, $\Omega$ is given by the entry-wise sum modulo $2$. More explicitly, given $x^o=(x_k)_{k\in\mathbb N}$, $y^o=(y_k)_{k\in\mathbb N} \in \Omega$, the composition law reads
	\begin{equation}
		x\oplus y=(x_k\oplus y_k)_{k\in\mathbb N}\;,
	\end{equation}
	where, with a slight abuse of notation, $\oplus$ also denotes the sum modulo $2$ in $\{ 0,1 \}$. The algebra $\mathrm L^\infty(\Omega_\infty,\mu_\lambda)$ and the group $\mathcal G \equiv (\Omega , \oplus)$ can be represented on the Hilbert space
	\begin{equation}
		\label{eq:hilbert_space_pukanski}
		\mathcal  H=\Bigg\{F:\Omega_\infty\times \Omega\rightarrow\mathbb C \,\Bigg| \,\sum_{x^o\in\Omega}\int_{\Omega_\infty} |F(x,x^o)|^2d\mu_\lambda(x)<+\infty\Bigg\}\;.
	\end{equation}
	In formulas, given $\phi \in \mathrm L^\infty(\Omega_\infty,\mu_\lambda)$ and $y^o \in \Omega$, let us define
	\begin{eqnarray}
		(L_\phi F)(x,x^o)&=&\phi(x) F(x,x^o)\;,\label{eq:multiplication_repr}\\
		(V_{y^o}F)(x,x^o) &=& \sqrt{\frac{d\mu_{\lambda}^{y^o}}{d\mu_{\lambda}}}(x)F(x\oplus y^o,x^o\oplus y^o)\;.\label{eq:unitary_representation}
	\end{eqnarray}
	Here, $d\mu_{\lambda}^{y^o}/d\mu_{\lambda}$ is the Radon-Nikodym derivative of the translation $\mu_{\lambda}^{y^o}$ of the measure $\mu_{\lambda}$ by $y^o$ with respect to $\mu_{\lambda}$. In particular, $V$ is a unitary representation of $\mathcal G$, and we can consider the von Neumann algebra $\mathcal M_{\mathrm P}$ generated by the operators $L_\phi$ and $V_{y^o}$. Then it turns out that:
	\begin{enumerate}[a)]
		\item When $\lambda = 1/2$ the algebra $\mathcal M_{\mathrm P}$ is the hyperfinite type $II_1$ factor;
		\item When $0<\lambda<1/2$ we obtain type $III$ factors. In particular, these algebras are inequivalent von Neumann algebras~\cite{Powers1967}, and are indicated as $\mathcal R_{\tilde{\lambda}}$, with $\tilde{\lambda}=\lambda/(1-\lambda)$.
	\end{enumerate}
	As a final remark, observe that the crossed product algebra of the group $\Omega$ over the algebra $\mathrm L^\infty(\Omega_\infty,\mu_\lambda)$ with unitary action $V$ is given by the commutant of $\mathcal M_{\mathrm P}$~\cite{takesaki2013theory}.

\bibliography{biblioproc.bib}
\bibliographystyle{spmpsci}
\end{document}